\documentclass{article}
\newtheorem{Assumption}{Assumption}
\newtheorem{Definition}{Definition}
\newtheorem{lemma}{lemma}
\newtheorem{remark}{remark}
\newtheorem{theorem}{theorem}
\newtheorem{proof}{proof}
\newtheorem{case}{case}

\usepackage{PRIMEarxiv}
\usepackage{amsmath}
\usepackage{amssymb}
\usepackage{CJKutf8}
\usepackage[utf8]{inputenc} 
\usepackage[T1]{fontenc}    
\usepackage{hyperref}       
\usepackage{url}            
\usepackage{booktabs}       
\usepackage{amsfonts}       
\usepackage{nicefrac}       
\usepackage{microtype}      
\usepackage{lipsum}
\usepackage{fancyhdr}       
\usepackage{graphicx} 
\graphicspath{{media/}}     

\pagestyle{fancy}
\thispagestyle{empty}
\rhead{ \textit{ }} 

\fancyhead[LO]{Fault-tolerant control}

\title{Fault-tolerant control of random switching topology multi-agent system based on event triggering

}

\author{
  Ouyang Lingcong\quad\quad   Kaijun Yang \\
  School of Electrical and Control Engineering, Shaanxi University of Science and Technology\\
  Xi'an \quad  China\\
  \texttt{\{Kaijun Yang\}yangkj2020@126.com} \\
}

\begin{document}
\maketitle

\begin{abstract}
In this paper, the formation control of multi-agent systems in random switching communication topology is studied, and the problem of excessive bandwidth and low control efficiency among multi-agents is solved. For nonlinear multi-agent systems, a sliding mode formation control algorithm with event-triggered random switching communication topology is proposed. Firstly, a fault-tolerant control strategy based on stochastic system model is designed to solve the problem of low controller efficiency and controller failure during the formation of multi-agent systems. Compared with the traditional multi-agent system dynamic model, the stochastic system model has stronger universality, which improves the efficiency of the controller. Secondly, in order to deal with the problem of high communication load and frequency between agents during formation in random switching communication topology, the sample-based event triggering controller is introduced into the controller, which effectively reduces the communication frequency and reduces the impact of communication delay. Finally, the stability and robustness of event-triggered random switching topology formation control are verified by computer simulation.
\end{abstract}

\keywords{ multi-agent system\and formation control \and switching \and communication topology
\and   event-triggering control }

\section{INTRODUCTION}\label{sec1}
In multi-agent system formation, the traditional periodic communication method usually leads to a large amount of communication load, which affects the real-time and efficiency of the system. The sampling event trigger mechanism effectively reduces the communication frequency and communication load by only communicating when certain conditions are met, which is one of the core features of this study. Multi-agent systems often face model uncertainty and external interference in practical applications. With its inherent robustness, sliding mode control can effectively resist these uncertainties and disturbances, so as to ensure the stable operation of the system in a complex environment. In multi-agent systems, failure of communication links or other components is inevitable. By designing fault-tolerant control strategy, the system can maintain stable formation and normal operation in the face of these faults, and improve the reliability and stability of the system. The traditional static communication topology lacks adaptability and robustness and is difficult to cope with the dynamic changes of the environment. By introducing the dynamic communication topology triggered by sampling events, the system can flexibly adjust the communication structure according to environmental changes, thus improving the adaptability and robustness of the system.

The choice of communication topology affects the mode and efficiency of information exchange in multi-agent systems. The traditional communication topology is usually static, that is, the communication link between agents is fixed and does not change as the environment changes. However, static topologies often lack adaptability and robustness in the face of uncertainty and dynamic environments. In addition, static communication topologies are prone to communication link failures, which may lead to the failure of the entire multi-agent system. In order to deal with communication signal loss and ensure stable information exchange between multiple agents, it is an effective method to model the switching process of communication topology as Markov process. The consistency problem of first-order and second-order multi-agent systems under Markov switching communication topology has been discussed in recent studies\cite{shang2016consensus, mo2018mean}. At the same time, the problem of consistency tracking for discrete second-order multi-agent systems with Markov switching topology is also discussed \cite{xie2018group}. There are also a series of related studies on the consistency of continuous time and discrete time multi-agent systems \cite{li2015finite, sun2021consensus}, and obtain sufficient consistency conditions. However, most of the current research focuses on the consistency tracking during random switching, and there are relatively few researches on the formation tracking of multi-agent systems in random switching communication topology. Therefore, it is necessary to further study the formation tracking problem of multi-agent systems in random switching topology to ensure that multi-agent systems can effectively maintain stable formation and complete expected tasks in uncertain and dynamic environments.

In order to overcome the limitations of static communication topology, a dynamic communication topology method based on sampling event triggering is proposed. The method determines whether communication is needed by periodic sampling, and only triggers communication when the system state changes, thus reducing the communication overhead and improving the robustness of the system. In recent years, sampling event triggering control has made remarkable progress. By setting trigger conditions, this mechanism can exchange information only when a specific event occurs, effectively reducing communication frequency and resource consumption. For example, the adaptive trigger condition\cite{huang2019adaptive,gu2017adaptive,xu2020distributed} dynamically adjusts the threshold based on system status changes to ensure system stability and minimize the communication load. In addition, the mechanism has been successfully applied to UAV formation, mobile robot network and sensor network, which significantly improves the adaptability and robustness of the system in complex environments. In terms of theoretical research, the stability conditions of event-triggered control systems are proved through mathematical models such as Lyapunov function and inequality technique. In general, the sampling event triggering control method \cite{zhang2016overview,li2014event,fan2018sampling} can effectively reduce the communication frequency while ensuring the system performance, and significantly improve the communication efficiency and robustness of multi-agent systems.

In order to solve communication failures in multi-agent systems, the researchers focused on exploring various fault-tolerant control strategies aimed at enhancing the resilience of the system so that agents can flexibly adjust their behavior in response to communication interruptions. However, most of the existing research is based on complete information about the normal operation of dynamic systems, while in reality, system reliability is increasingly required due to potential component failures and system process anomalies. The fault-tolerant control strategy of the multi-agent system is discussed. Some of these studies combine hybrid trigger control \cite{ye2017adaptive,sakthivel2017fault, sathishkumar2020hybrid} and fault-tolerant sampled data control to deal with possible system failures. In addition, a robust fault-tolerant control method \cite{hao2013robust,ye2017fault} is also studied to deal with the fault situation in the system. In addition, the robust fault-tolerant control design of the system is studied based on the fuzzy model when the actuator fault changes with time. However, the sliding mode control method is rarely considered in the study of event-triggered consistency. Although the event-triggered strategy is introduced into the sliding mode control, the continuous information of the system cannot be used because the sampling replaces the continuous information, which challenges the robust consistency of quantitative description of multi-agent systems under the event-triggered sliding mode control strategy and stimulates the development of this work.

Sliding mode control (SMC), as an effective robust control method, can reduce the adverse effects of model uncertainty and unknown external interference in the control system. Due to its inherent robustness, sliding mode control is widely considered and combined with other control techniques in many theoretical and practical engineering systems\cite{zhao2017h}. In related research, adaptive sliding mode control strategy is used to study the stability of switching complex network systems \cite{plestan2010new,fei2020neural,li2018robust}. At the same time, a method combining dynamic event triggering control and sliding mode observer is proposed to derive the stability conditions of time-delayed T-S fuzzy systems\cite{liu2019observer,fang2019sliding}. In addition, an asynchronous method is used to ensure the stability of nonlinear Markov jump systems and improve their dissipative performance. In addition, a sliding mode control algorithm based on self-triggering \cite{song2021self,wang2020self,behera2015self} is proposed to analyze the stability of linear systems with external disturbances. The researchers also propose a new combined approach combining terminal sliding mode control and disturbance observe-based control to study the finite-time consistency problem of higher-order multi-agent systems. However, as far as we know, few people have studied the application of sliding mode control methods to scale consistency problems, let alone event-triggered sliding mode control methods. Therefore, this study aims to fill the gap and discuss the application of sliding mode control methods in scale consistency problems, especially the study of event-triggered sliding mode control methods \cite{wang2016distributed}.

To sum up, there are relatively few comprehensive studies on sampling event-triggered dynamic communication topology and sliding-mode formation control, which mainly focus on a single field and lack of integration and optimization of the two. Therefore, this paper aims to explore the integration of random switching communication topology triggered by sampling events with sliding mode formation control to achieve robust formation control of multi-agent systems in dynamic environments.
The contributions of this paper are as follows:

For stochastic switching topology nonlinear multi-agent systems, a stochastic switching topology multi-agent system sliding mode formation fault-tolerant control strategy is proposed based on a stochastic system model\cite{qi2019sliding}. This strategy ensures the stability of the multi-agent system. Compared to the traditional dynamic models used in multi-agent systems, the stochastic system model adopted in this paper is more versatile, and the controller used is more efficient.Addressing the issue of high communication load and frequency between agents in the formation process of multi-agent systems with stochastic switching communication topology, a sample point-based event-triggered controller is introduced into the sliding mode controller. This effectively reduces communication frequency and mitigates the impact of communication delays.

Then the rest of the paper is organised as follows. Some preliminaries, model formulation and some useful definitions are given in Section 2. Main results are proposed in Section 3. In Section 4, simulations are presented to illustrate the effectiveness of our results. Section 5 comes to a conclusion.

\section{PROBLEM STATEMENT AND PRELIMINARIES}\label{sec2}
\subsection{Graph theory}
The information exchange topology among a network of agents is modelled as a digraph $\mathcal{G}=(\mathcal{V}, \mathcal{E}, \mathcal{A})$, where $\mathcal{V}=\{1,2, \ldots, N\}$ and $\mathcal{E} \subseteq\{(j, i): j, i \in \mathcal{V}, j \neq i\}$ are the set of nodes and edges, respectively. $N_{i}=$ $\{j \in \mathcal{V}:(j, i) \in \mathcal{E}\}$ denotes the neighbour set of agent $i$. $\mathcal{A}=\left[a_{i j}\right]_{N \times N}$ represents the adjacency matrix with $a_{i j}>0$ if and only if $j \in N_{i}$, and $a_{i j}=0$ otherwise. The degree matrix of the graph $\mathcal{G}$ can be represented as $\mathcal{D}=\operatorname{diag}\left\{d_{i i}\right\} \in R^{N \times N}$ with $d_{i i}=\sum_{j=1}^{n} a_{i j}$. Correspondingly, the Laplacian matrix is denoted by $L=\mathcal{D}-\mathcal{A}$. A directed path from node $j$ to node $i$ is a sequence of ordered edges with the form $\left(j, i_{1}\right),\left(i_{1}, i_{2}\right), \ldots,\left(i_{p-1}, i_{p}\right),\left(i_{p}, i\right)$ where the nodes $i_{k} \in$ $\mathcal{V}, k=1,2, \ldots, p$, are distinct.A directed tree is a digraph, in which every node has exactly one parent except for the root defined as the only node which has no parent but a direct path to every other node. \\
A directed spanning tree is a directed tree, which consists of all the nodes and some edges in $\mathcal{G}$. A directed graph is said to contain a directed spanning tree if one of its subgraphs is a directed spanning tree. 
\subsection{Markovian jump process}
Let  $ G_{i} (t)=\textrm{Pr}( h_{n+1} \le t \mid R_{n} =i) $ 
which is the distribution function of the sojourn time when the mode stays in mode $i$. The transition probability of the embedded Markov process $\left\{R_n \right\}_{n \in N} $is defined by $q_{ij}= \textrm{Pr}(R_{n+1}= j \mid R_n= i)$, for any $i,j \in S ,i \neq j ,n\in N $.Since $G_i(t)$ only depends on the current mode $i$, it can be verified that $G_i(t)=\textrm{Pr}(h_{n+1} \le t\mid R_n=i)=\textrm{Pr}(h_{n+1}\le t \mid R_n=i,R_{n+1}=j )$ 
Define the transition rates $\pi_{ii}(h)=-\frac{g_i(h)}{1-G_i(h)}$ and $\pi_{ij}(h)=-\frac{q_{ij}g_i(h)}{1-G_i(h)},j\neq i $ where $g_i(h)$ is the probability density function of $G_i(h)$.The transition process of the Markov model is determined by the following transition rates:
\begin{equation}
	\begin{aligned}
		\textrm{Pr}(r(t+h)=j\mid r(t)=i)=\begin{cases}
			\pi_{ij}(h)h+o(h)\quad j\neq i, \\
			1+\pi_{ii}(h)h+o(h)\quad j= i.\end{cases}
	\end{aligned}
\end{equation}
\subsection{ Problem formulation}
We consider a multi-agent system consisting of $N$ agents. The dynamics of each agent is

\begin{equation}
	\begin{aligned}
		\begin{cases}
			\mathcal{E}dx_i(t)= & \Big\{\big(A(r_t)+\Delta A(r_t)\big) x_i(t)++B(r_t)[u_i^{f}(t)+f(t, x_i)]\Big\}dt+D_i x_i(t) d\omega(t),\\
			\mathcal{E}dx_0(t)= & \Big\{\big(A(r_t)+\Delta A(r_t)\big) x_0(t)++B(r_t)f(t, x_0)\Big\}dt+D_i x_0(t) d\omega(t),
		\end{cases}
	\end{aligned}\label{1}
\end{equation}
where $x(t)\in\mathbb{R}^{N}$ is the state vector, $u(t)\in\mathbb{R}^{N}$ is the control input , $f(x_i(t),r(t),t) \in  \mathbb{R}^{N\times N} $ is the continuously differentiable vector-valued functions which denote the nonlinear inherent dynamics of  follower $i$ and $\omega(t)$ is a one-dimensional Brownian motion satisfying $ \mathbf{E}\left\{d\omega(t)=0\right\} $ and $ \mathbf{E}\left\{d\omega^{2}(t)=dt\right\} $. $\mathbf{E}\left\{\cdot \right\}$ represents the mathematical expectation of a random process or vector. The matrix $\mathcal{E}\in \mathbb{R}^{N\times N} $ may be singular. It is assumed that $rank(\mathcal{E})\leq N$. Matrices $A(r_t)$, $B(r_t)$ and $D(r_t)$ are known and real with appropriate dimensions where $B(r_t)$ has full column rank.In order to simplify the concept, we denote each potential value of $r_t$ as $r_t = r \in S$, where $A(r_t)$ is represented as $A_r$, $B(r_t)$ as $B_r$, $C(r_t)=C_r$, $D(r_t)=D_r$, $\Delta A(r_t,t)=\Delta A_r(t)$.
In the equation, $\Delta A_r$ represents the structural uncertainty that satisfies $\Delta A_r = M_i F_i(t)N_i$, where $M_i$ and $N_i$ are known matrices, and $F_i(t)$ represents the unknown uncertain matrix function that satisfies $F^{T}_{i}(t)F_i(t) < I$.The fault-tolerant formation control protocol is designed as follows:
\begin{equation}
	u_i^{f}(t)=\eta_i u_i(t)+u_b(t),\quad t>t_f,
\end{equation}
In the equation, $\eta(t)$ represents the actuator efficiency coefficient, defined as $\eta(t) = \text{diag}\left\{\eta_1, \eta_2, \ldots, \eta_n\right\}\in \mathbb{R}^{N\times N}$, where $\eta_i$ denotes the efficiency of the $i$-th actuator. $u_b(t) \in \mathbb{R}^{N\times 1}$ represents the bias fault, and $t_f$ is the time of occurrence of an unknown fault. Additionally, $u_i(t)$ represents the actuator subjected to physical constraints.
Using the properties of the Kronecker product, then we can get
\begin{equation}
\begin{split}\label{004}
		\mathcal{\tilde{E}} dx(t)= \left\{\left(\tilde{A}_r+\Delta \tilde{A}_r\right) x(t)+\tilde{B}_r\left[u^f(t) +f(t, x(t))\right]\right\} dt +\tilde{D}_r x(t)d \omega(t),
\end{split}
\end{equation}
where $$x(t)=\left[x_i(t), x_2(t), \ldots, x_N(t)\right]^T$$, $$f(t, x(t))=\left[f_1(t, x_1(t)), f_2(t, x_2(t)), \ldots, f_N(t, x_N(t))\right]^{T}$$. The function $f(t, x(t))$ is a bounded function and satisfies $|f(t, x(t))| \leq \kappa|x(t)|$, where $\kappa>0$. $\mathcal{\tilde{E}}=I_N \otimes \mathcal{E}$, $\tilde{A}_r=I_N \otimes A_r$, $\Delta \tilde{A}_r=I_N \otimes \Delta A_r$, $ \tilde{B}_r=I_N \otimes B_r$,$\tilde{D}_r=I_N \otimes D_r$. Define the tracking error as $y_i(t)=x_i(t)-x_0(t)-h_i,i=1,12,\ldots,N $ , $u(t)=\left[u_1(t),u_2(t),\ldots,u_N(t)\right]^{T}$, $y(t)=[y_1(t),y_2(t),\ldots,y_N(t)]^{T}$.
\begin{equation}
	\begin{aligned}
		\mathcal{\tilde{E}} dy(t)= \left\{\left(\tilde{A}_r+\Delta \tilde{A}_r\right) y(t)+\tilde{B}_r\left[u^f(t) +f(t, y(t))\right]\right\} dt +\tilde{D}_r y(t)d \omega(t),
	\end{aligned}\label{007}
\end{equation}
\begin{Assumption}
	Assuming in the topology described by $G$, the leader is reachable by all followers, meaning that for each follower, there exists at least one directed path from the leader to it.
\end{Assumption}
\begin{Definition}\cite{zhang2017sliding}
	Consider the unforced nominal singular MJSs as
	\begin{equation}
		\mathcal{E} d x(t)=\mathcal{A}_i x(t) d t+\mathcal{D}_i x(t) d \omega(t) .\label{0007}
	\end{equation}
	System ($\ref{0007}$) is said to be: (i) regular if $\operatorname{det}\left(s \mathcal{E}-\mathcal{A}_i\right)$ is not identically zero; (ii) impulse-free if $\operatorname{deg}\left\{\operatorname{det}\left(s \mathcal{E}-\mathcal{A}_i\right)\right\}=$ $\operatorname{Rank}(\mathcal{E})$; (iii) stochastically stable if for any initial condition $x_0$ and $r_0 \in N, E\left\{\int_0^{\infty}\|x(t)\|^2 d t \mid x_0, r_0\right\}<\infty$ holds; (iv) stochastically admissible, if it is regular, impulse-free, and stochastically stable.\label{def1}
\end{Definition}
\begin{Definition}
	The time-varying formation of agent $i$ is specified by the vector $h_i(t) = [h_{ip}(t), h_{iv}(t)]^T \in \mathbb{R}^{2\times1}$, where $h_i(t)$ is continuously differentiable, and $h_{ip}(t)$ and $h_{iv}(t)$ are the formation position and velocity components, respectively, with $\dot{h}{ip}(t) = h{iv}(t)$. For any initial conditions and $i = 1,2,\ldots, N$, there exists a constant $\theta \ge 0$ such that
	\begin{equation}
		\lim_{t\to \infty}\| x_{i}-h_{i}-x_{0} \| \leq \theta,
	\end{equation}
	where $\theta$ represents the tracking error bound for the multi-agent system formation.
\end{Definition}
\begin{lemma}
	For any vectors $X$ and $Y$, $Q> 0$ belongs to $\mathbb{R}^{n\times n}$. The following inequality holds
	\begin{equation}
		2X^TY\leq X^TQX+Y^TQ^{-1}Y.
	\end{equation}
\end{lemma}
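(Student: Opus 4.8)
The plan is to establish this by a completion-of-squares argument built on the positive-definite square root of $Q$. Since $Q \in \mathbb{R}^{n \times n}$ is symmetric and positive definite, the spectral decomposition $Q = U \Lambda U^{T}$ with $U$ orthogonal and $\Lambda = \operatorname{diag}\{\lambda_1, \ldots, \lambda_n\}$, $\lambda_k > 0$, yields a symmetric positive-definite square root $Q^{1/2} = U \Lambda^{1/2} U^{T}$ satisfying $Q^{1/2} Q^{1/2} = Q$, $(Q^{1/2})^{T} = Q^{1/2}$, and whose inverse $Q^{-1/2}$ obeys $Q^{-1/2} Q^{-1/2} = Q^{-1}$ and $Q^{1/2} Q^{-1/2} = I$.

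First I would introduce the auxiliary vector $z = Q^{1/2} X - Q^{-1/2} Y \in \mathbb{R}^{n}$. Because $z^{T} z = \|z\|^{2} \ge 0$, expanding the quadratic form and using the symmetry identities above gives
\begin{equation}
0 \le z^{T} z = X^{T} Q X - X^{T} Y - Y^{T} X + Y^{T} Q^{-1} Y .
\end{equation}
Next, since $X^{T} Y$ and $Y^{T} X$ are scalars that are transposes of one another, $X^{T} Y + Y^{T} X = 2 X^{T} Y$, so rearranging the displayed inequality immediately yields $2 X^{T} Y \le X^{T} Q X + Y^{T} Q^{-1} Y$, which is the assertion.

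There is essentially no real obstacle in this argument; the only step that deserves an explicit word of justification is the existence and symmetry of $Q^{1/2}$, which the spectral decomposition above supplies, together with the fact that $Q^{-1}$ is also symmetric positive definite. If one prefers to avoid square roots altogether, an equivalent route is to observe that the block matrix $\begin{bmatrix} Q & I \\ I & Q^{-1} \end{bmatrix}$ is positive semidefinite (its Schur complement with respect to the $Q$ block is $Q^{-1} - Q^{-1} = 0 \succeq 0$) and then evaluate the associated quadratic form at $[\, X^{T}, -Y^{T} \,]^{T}$, which reproduces the same inequality. I would present the square-root version as the cleaner of the two.
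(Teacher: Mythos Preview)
Your argument is correct: the completion-of-squares identity $\|Q^{1/2}X - Q^{-1/2}Y\|^{2} \ge 0$ expands exactly to the desired inequality, and the existence of the symmetric positive-definite square root is fully justified by the spectral decomposition you invoke. The alternative Schur-complement route you sketch is also valid.

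As for comparison with the paper: the paper does not actually supply a proof of this lemma. It is listed among the preliminaries as a standard inequality and is used later (in the manipulations around the bounds on $2y^{T}(t)\bar{\mathcal{P}}_i^{T}\Delta\tilde{\mathcal{A}}_i(t)y(t)$), but no derivation is given. Your square-root proof is the textbook justification and would serve perfectly well here.
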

\begin{lemma}\cite{Hirt1974}
	Let $M, F,N$ and $P$ be real matrices of appropriate dimensions with $P>0,F^{\mathrm{T}}F\le I$ and a scalar $\varepsilon>0.$ Then
	\begin{equation}
		MFN+N^{\mathrm{T}}F^{\mathrm{T}}M^{\mathrm{T}}\leq\varepsilon MP^{-1}M^{\mathrm{T}}+\frac{1}{\varepsilon}N^{\mathrm{T}}F^{\mathrm{T}}PFN.
	\end{equation}
\end{lemma}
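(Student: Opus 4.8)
The plan is to derive the bound from Lemma~1 by a suitable choice of matrices, the only extra ingredient being the symmetric square root of $P$. Since $P>0$, it has a unique symmetric positive definite square root $P^{1/2}$, which is invertible; set $P^{-1/2}:=(P^{1/2})^{-1}$, so that $P^{1/2}P^{1/2}=P$, $P^{-1/2}P^{-1/2}=P^{-1}$, and both factors are symmetric. These two factorizations of $P$ and $P^{-1}$ are exactly what allow the cross term $MFN$ to be split symmetrically between the two sides of the inequality.

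The key step is to rewrite the cross term as a product that exposes the $P$-weighting in a balanced way: $MFN=(MP^{-1/2})(P^{1/2}FN)=X^{\mathrm T}Y$ with $X:=P^{-1/2}M^{\mathrm T}$ and $Y:=P^{1/2}FN$. Then $MFN+N^{\mathrm T}F^{\mathrm T}M^{\mathrm T}=X^{\mathrm T}Y+Y^{\mathrm T}X$, and applying Lemma~1 (in its symmetric form, with $Q=\varepsilon I>0$) gives
\[
X^{\mathrm T}Y+Y^{\mathrm T}X \;\le\; \varepsilon\,X^{\mathrm T}X + \frac{1}{\varepsilon}\,Y^{\mathrm T}Y \;=\; \varepsilon\,M P^{-1}M^{\mathrm T} + \frac{1}{\varepsilon}\,N^{\mathrm T}F^{\mathrm T}PFN,
\]
which is precisely the assertion. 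Equivalently, one can argue from scratch: the matrix $Z:=\sqrt{\varepsilon}\,MP^{-1/2}-\frac{1}{\sqrt{\varepsilon}}N^{\mathrm T}F^{\mathrm T}P^{1/2}$ satisfies $ZZ^{\mathrm T}\ge 0$, and expanding this inequality, using $P^{-1/2}P^{1/2}=I$ and the symmetry of $P^{\pm 1/2}$, produces exactly the four terms above — this is just the completion-of-squares proof of Lemma~1 written out in place.

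A remark on the hypotheses and on where any difficulty lies: there is no real analytic obstacle in the proof itself. The condition $F^{\mathrm T}F\le I$ is in fact not needed to close the estimate in the form stated, since $F$ occurs on both sides; it becomes essential only in the downstream analysis, when one further dominates $N^{\mathrm T}F^{\mathrm T}PFN$, for instance by $\lambda_{\max}(P)\,N^{\mathrm T}N$ via $F^{\mathrm T}PF\le\lambda_{\max}(P)F^{\mathrm T}F\le\lambda_{\max}(P)I$. The only points requiring care are the existence and symmetry of $P^{1/2}$ and $P^{-1/2}$ (guaranteed by $P>0$) and the bookkeeping in the expansion, so that each of the four terms is matched to the correct side with the correct sign; both are routine.
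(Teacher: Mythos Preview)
Your proof is correct: the completion-of-squares argument via $P^{1/2}$ (equivalently, the application of Lemma~1 with $X=P^{-1/2}M^{\mathrm T}$, $Y=P^{1/2}FN$, $Q=\varepsilon I$) is the standard derivation, and your observation that the hypothesis $F^{\mathrm T}F\le I$ is not actually used in the stated inequality is also right. The paper itself does not prove this lemma; it is quoted as a known result from the cited reference, so there is no in-paper proof to compare against.
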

\begin{remark}
	From Lemma 2,when $F\:=\:I,$ it follows that MN +
	$N^{\mathrm{T}}M^{\mathrm{T}}\:\leq\:\varepsilon MP^{-1}M^{\mathrm{T}}\:+\:\frac{1}{\varepsilon}N^{\mathrm{T}}PN,$and when $P\:=\:I,$ it follows that
	\begin{equation}
		MFN+N^{\mathrm{T}}F^{\mathrm{T}}M^{\mathrm{T}}\leq\varepsilon MM^{\mathrm{T}}+\frac{1}{\varepsilon}N^{\mathrm{T}}N.
	\end{equation}
\end{remark}
\section{MAIN RESULTS}\label{sec3}
This section investigates admissible formation analysis of a nonlinear singular stochastic  Markovian MASs \eqref{004}. Firstly, we devise an integral-type switching surface function. Subsequently, we derive the sufficient conditions for admissible formation of continuous-time sliding mode dynamics \eqref{edx}, followed by the design of a sliding mode fault-tolerant control law. Finally, building upon the continuous-time sliding mode dynamics \eqref{edx}, an event-triggered sliding mode dynamics \eqref{edy3} is established, and the sufficient conditions for admissible formation are derived.
\subsection{Switching surface design}
In order to achieve consistent tracking formation control, the distributed integral sliding mode surface is designed as follows:
\begin{equation}
\begin{split}
	s_{i}(t)\triangleq&\mathcal{B}_i^T\hat{\mathcal{P}}_i\mathcal{E}y_{i}(t)-\int_0^t\mathcal{B}_i^T\hat{\mathcal{P}}_i\mathcal{A}_iy_{i}(s)ds-\int_0^t\mathcal{B}_i^T\hat{\mathcal{P}}_i\mathcal{B}_i\mathcal{K}_i(\sum_{j\in N_{i}}a_{i j}(x_{j}(s)-x_{i}(s))\\
&+b_{i0}(x_{i}(s)-x_{0}(s)))ds,
\end{split}
\end{equation}
where $\hat{P}_{i}$ and $K_i$ are the real matrices to be designed with $B^{T}\hat{P}_{i}B_{i}$ satisfying nonsingularity. define $s=\left[s_1,s_2,\ldots,s_n\right] $. It should be pointed out that due to $B_i$ with full column rank, the nonsingularity of $B^{T}\hat{P}_{i}B_{i}$ can be ensured by $\hat{P}_{i}> 0 $ , $\forall i \in S$.
\begin{equation}
	s(t)\triangleq \mathcal{\tilde{B}}_i^T\hat{\tilde{\mathcal{P}}}_i\mathcal{\tilde{E}}y(t)-\int_0^t\mathcal{\tilde{B}}_i^T\hat{\tilde{\mathcal{P}}}_i\mathcal{\tilde{A}}_i y(s)ds-\int_0^t\mathcal{\tilde{B}}_i^T\hat{\tilde{\mathcal{P}}}_i\mathcal{\tilde{B}}_i\mathcal{\tilde{K}}_i\otimes \mathcal{H} y(s) ds,\label{00012}
\end{equation}
Considering the solution of system ($\ref{007}$) yields
\begin{equation}
	\begin{aligned}
		{\mathcal{\tilde{E}}}y(t)= {\mathcal{\tilde{E}}}y(0)+\int_{0}^{t}[({\mathcal{\tilde{A}}}_{i}+\Delta{\mathcal{\tilde{A}}}_{i}(t))y(t)+{\mathcal{\tilde{B}}}_{i}(u^{f}(t)+f_{i})]d t+\int_0^t\mathcal{\tilde{D}}_i y(t)d\omega(t),
	\end{aligned}
\end{equation}
where $\int_0^t \mathcal{D}_i y(t) d \omega(t)$ stands for the Itô's stochastic integral. Here, if $\mathcal{B}_i^T \hat{\mathcal{P}}_i \mathcal{D}_i=0$, then it follows from (4) and (5) that
\begin{equation}
	\begin{aligned}
		s(t)=& \mathcal{\tilde{B}}_i^T\hat{\tilde{\mathcal{P}}}_i\mathcal{\tilde{E}} y(0)+\int_0^t \mathcal{\tilde{B}}_i^T\hat{\tilde{\mathcal{P}}}_i\left[\left(\Delta \tilde{A}_i(t)-\mathcal{\tilde{B}}_i\mathcal{\tilde{K}}_i\otimes\mathcal{H}\right) y(t)+\mathcal{\tilde{B}}_i\left(\eta u(t)+u_b(t)\right.\right.\\&\left. \left.+f_i\right)\right] d t\label{St}
	\end{aligned}
\end{equation}
When the integral sliding mode surface take place, then we have $s(t) = 0$ and $\dot{s}(t) = 0$. Thus, the equivalent controller can be represented as
\begin{equation}
	u_{e q}(t)= \eta^{-1}\left[\mathcal{\tilde{K}}_i\otimes\mathcal{H} y(t)-\left(\mathcal{\tilde{B}}_i^T\hat{\tilde{\mathcal{P}}}_i\mathcal{\tilde{B}}_i\right)^{-1} \mathcal{\tilde{B}}_i^T\hat{\tilde{\mathcal{P}}}_i \Delta \tilde{A}_{i} y(t)-f_{i}-u_b(t)) \right]\label{ueq}
\end{equation}
With the equivalent controler ($\ref{ueq}$),the closed-loop sliding mode dynamics can be obtained as
\begin{equation}
	\begin{aligned}
		{\mathcal{\tilde{E}}}d y(t)=&\left[\mathcal{\tilde{A}}_{i}+\mathcal{\tilde{B}}_i\mathcal{\tilde{K}}_i\otimes\mathcal{H}+\Delta\mathcal{\tilde{A}}_{i}(t)-\mathcal{\tilde{B}}_i\left(\mathcal{\tilde{B}}_i^T\hat{\tilde{\mathcal{P}}}_i\mathcal{\tilde{B}}_i\right)^{-1} \mathcal{\tilde{B}}_i^T\hat{\tilde{\mathcal{P}}}_i \Delta \tilde{A}_{i}\right] y(t)d t\\ &+{\mathcal{\tilde{D}}}_{i}y(t)d\omega(t), \label{edx}
	\end{aligned}
\end{equation}
Therefore, the stochastic stability of systems (\ref{edx})  will be investigated in the following part
\subsection{Sliding mode analysis}
The aim of this section is to propose a non-singular integral sliding mode surface based on a non-singular integral sliding mode surface that allows the admissible formation problem to be satisfied in the presence of Markovian switching and random disturbances.
\begin{theorem}
	If there exist matrices $\tilde{\mathcal{P}}_i>0, Q_r > 0, R_1 > 0, P_r>0 $, symmetric matrices
	$\hat{\tilde{\mathcal{P}}}_{i}>0$, nonsingular matrices $\tilde{\mathcal{Q}_{i}}$ $\hat{\mathcal{Q}}_{i}$, and positive scalars $\varepsilon_{1i}$,
	$\varepsilon_{2i}$, $\forall i\in{\mathcal{S}}$, satisfying  
	\begin{equation}
		\begin{aligned}
			&\Phi_{r11}=\begin{bmatrix}
				\Sigma_{1i}&\Sigma_{2i}\\
				*&\Sigma_{3i}
			\end{bmatrix}<0,
			\begin{bmatrix}
				-\hat{\mathcal{P}}_i^{-1}&\mathcal{M}_i\\
				*&-\varepsilon_{2i}\mathcal{I}
			\end{bmatrix}<0,\\
			\Phi_{r} =&\left[\begin{array}{ccccc}
				\Phi_{ r11} & H{\mathcal{\tilde{E}}} &0&0&0  \\
				* & (N-H){\mathcal{\tilde{E}}} &0 &0&0\\
				* & * & \tilde{Q}_j-N{\mathcal{\tilde{E}}}&0&0\\
				* & * & * & \frac{\tilde{R}}{d} & *\\
				* & * & * & * &d {\mathcal{\tilde{E}}}^{T}\tilde{R}{\mathcal{\tilde{E}}}
			\end{array}\right]<0,
		\end{aligned}
	\end{equation}
	where
	\begin{equation}
		\begin{aligned}
			\Sigma_{ 1i}=&\sum_{j=1}^s \pi_{i j}(b) \mathcal{\tilde{E}}^T \bar{P}_j+\bar{P}_i\left(\tilde{\mathcal{A}}_i +\mathcal{\tilde{B}}_i\mathcal{\tilde{K}}_i\otimes\mathcal{H}\right) +\left(\tilde{\mathcal{A}}_i +\mathcal{\tilde{B}}_i\mathcal{\tilde{K}}_i\otimes\mathcal{H}\right)^T  \bar{P}_i\\&+{\mathcal{\tilde{D}}}^{T}_{i} \tilde{\mathcal{P}}_i {\mathcal{\tilde{D}}}_{i}+\tilde{Q}_j,\\
			\Sigma_{ 2i}&=\begin{bmatrix}\mathcal{B}_i&&\mathcal{P}_i^T\mathcal{M}_i&&\varepsilon_{1i}\mathcal{N}_i^T&&\mathcal{N}_{i}^{T}\mathcal{F}_{i}^{T}(t)\mathcal{M}_{i}^{T}\end{bmatrix},\\
			\Sigma_{ 3i}&=-diag\left\{\mathcal{B}_{i}^{T}\hat{\cal P}_{i}\mathcal{B}_{i},\,\varepsilon_{1i}\mathcal{I},\,\varepsilon_{1i}\mathcal{I},\,\varepsilon_{2i}^{2}\mathcal{I}\right\},
		\end{aligned}
	\end{equation}
	the sliding mode dynamics ($\ref{edx}$) is stochastically admissible.
\end{theorem}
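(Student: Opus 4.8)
The plan is to establish, one by one, the three properties that make up stochastic admissibility in Definition~\ref{def1}: regularity, the impulse-free property, and stochastic stability of the sliding mode dynamics \eqref{edx}. For the first two I would exploit the singular structure of $\tilde{\mathcal{E}} = I_N\otimes\mathcal{E}$. Choosing nonsingular $U,V$ with $U\tilde{\mathcal{E}}V = \operatorname{diag}(I,0)$ and transforming the closed-loop matrix
\[
\tilde{\mathcal{A}}_i^{\mathrm{cl}} := \tilde{\mathcal{A}}_i + \tilde{\mathcal{B}}_i\tilde{\mathcal{K}}_i\otimes\mathcal{H} + \Delta\tilde{\mathcal{A}}_i - \tilde{\mathcal{B}}_i\bigl(\tilde{\mathcal{B}}_i^T\hat{\tilde{\mathcal{P}}}_i\tilde{\mathcal{B}}_i\bigr)^{-1}\tilde{\mathcal{B}}_i^T\hat{\tilde{\mathcal{P}}}_i\Delta\tilde{\mathcal{A}}_i
\]
together with $\bar P_i$ into these coordinates, the lower-right block of the inequality $\Sigma_{1i}<0$ --- which follows from $\Phi_{r11}<0$ once a Schur complement absorbs $\Sigma_{2i}$ and $\Sigma_{3i}$, the extra diagonal blocks of $\Sigma_{3i}$ being negative --- forces the $(2,2)$ block of $U\tilde{\mathcal{A}}_i^{\mathrm{cl}}V$ to be nonsingular. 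By the classical singular-system argument (cf.\ \cite{zhang2017sliding}) this yields $\det(s\tilde{\mathcal{E}}-\tilde{\mathcal{A}}_i^{\mathrm{cl}})\not\equiv 0$ and $\deg\det(s\tilde{\mathcal{E}}-\tilde{\mathcal{A}}_i^{\mathrm{cl}}) = \operatorname{rank}\tilde{\mathcal{E}}$ for every mode $i\in\mathcal{S}$, i.e.\ regularity and impulse-freeness.

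For stochastic stability I would take a mode-dependent Lyapunov--Krasovskii-type functional built on $V_1 = y^T(t)\tilde{\mathcal{E}}^T\bar P_{r_t}y(t)$, with $\bar P_i$ satisfying $\tilde{\mathcal{E}}^T\bar P_i = \bar P_i^T\tilde{\mathcal{E}}\ge 0$ and linked to $\tilde{\mathcal{P}}_i$, augmented by integral terms carrying $\tilde Q_{r_s}$ and $\tilde R$ that absorb the sojourn-time dependence of the transition rates, with $H$, $N$, $d$ serving as free weighting and bounding parameters. Applying the weak infinitesimal operator $\mathcal{L}$ of the semi-Markov jump diffusion \eqref{edx}: the drift produces $y^T\bigl(\bar P_i\tilde{\mathcal{A}}_i^{\mathrm{cl}} + (\tilde{\mathcal{A}}_i^{\mathrm{cl}})^T\bar P_i\bigr)y$, the It\^o correction from $\tilde{\mathcal{D}}_iy\,d\omega$ adds $y^T\tilde{\mathcal{D}}_i^T\tilde{\mathcal{P}}_i\tilde{\mathcal{D}}_iy$, the jump term contributes $\sum_{j}\pi_{ij}(h)\,y^T\tilde{\mathcal{E}}^T\bar P_jy$ with the sojourn-time-dependent rates (evaluated at the representative point $b$ as in $\Sigma_{1i}$), and the Krasovskii terms supply $\tilde Q_j$ together with the $\tilde R/d$ and $d\tilde{\mathcal{E}}^T\tilde R\tilde{\mathcal{E}}$ pieces after a Jensen / reciprocally-convex bound on the sojourn interval --- reproducing exactly the entries of $\Sigma_{1i}$ and of the augmented matrix $\Phi_r$.

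It then remains to fold in the uncertainty. Writing $\Delta\tilde{\mathcal{A}}_i = \tilde M_iF_i(t)\tilde N_i$ with $F_i^TF_i\le I$, the indefinite cross-terms $\bar P_i\Delta\tilde{\mathcal{A}}_i + (\cdot)^T$ and $-\bar P_i\tilde{\mathcal{B}}_i(\tilde{\mathcal{B}}_i^T\hat{\tilde{\mathcal{P}}}_i\tilde{\mathcal{B}}_i)^{-1}\tilde{\mathcal{B}}_i^T\hat{\tilde{\mathcal{P}}}_i\Delta\tilde{\mathcal{A}}_i + (\cdot)^T$ are overbounded via Lemma~2 and its $P=I$ corollary in the Remark, using the scalars $\varepsilon_{1i}$, while the gain of the projection factor $(\tilde{\mathcal{B}}_i^T\hat{\tilde{\mathcal{P}}}_i\tilde{\mathcal{B}}_i)^{-1}\tilde{\mathcal{B}}_i^T\hat{\tilde{\mathcal{P}}}_i\tilde M_i$ is controlled by the second LMI $\bigl[\begin{smallmatrix}-\hat{\mathcal{P}}_i^{-1}&\mathcal{M}_i\\ *&-\varepsilon_{2i}\mathcal{I}\end{smallmatrix}\bigr]<0$ through a Schur complement (accounting for $\varepsilon_{2i}$). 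Collecting these inflated terms and applying the Schur complement in the reverse direction --- re-expanding $\Sigma_{1i}$ into the block form $\Phi_{r11}$ and then into $\Phi_r$ --- shows that $\Phi_{r11}<0$, the $\mathcal{M}_i$-LMI and $\Phi_r<0$ together force $\mathcal{L}V\le -\alpha\|y(t)\|^2$ for some $\alpha>0$. Dynkin's formula and the standard limiting argument then give $\mathbf{E}\bigl\{\int_0^\infty\|y(t)\|^2\,dt\mid y_0,r_0\bigr\}<\infty$, which is condition (iii) of Definition~\ref{def1}; combined with the first step, \eqref{edx} is stochastically admissible.

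The step I expect to be the main obstacle is the joint handling of the uncertainty and the \emph{oblique} projection $\tilde{\mathcal{B}}_i(\tilde{\mathcal{B}}_i^T\hat{\tilde{\mathcal{P}}}_i\tilde{\mathcal{B}}_i)^{-1}\tilde{\mathcal{B}}_i^T\hat{\tilde{\mathcal{P}}}_i$: since this matrix is not symmetric, the norm-bounding lemma has to be invoked more than once with a deliberate choice of which factors to keep, and the auxiliary matrix $\hat{\tilde{\mathcal{P}}}_i$ must be reconciled with the Lyapunov matrix $\bar P_i$ so that the cross-terms genuinely collapse into the claimed blocks. Compounding this, the semi-Markov transition rates $\pi_{ij}(h)$ depend on the sojourn time, so each inequality must be made to hold uniformly in $h$ (or at the representative value $b$), and because $\tilde{\mathcal{E}}$ is singular the functional $V$ is only positive semidefinite --- so the coercivity needed to pass from $\mathcal{L}V\le -\alpha\|y\|^2$ to stochastic stability has to be recovered from the regularity/impulse-free decomposition obtained in the first step.
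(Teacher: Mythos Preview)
Your proposal is correct and follows essentially the same route as the paper: a mode-dependent Lyapunov--Krasovskii functional, the semi-Markov weak infinitesimal operator computation (with the CDF limits producing the $\pi_{ij}(h)$ term), free-weighting matrices $H,N$ via Newton--Leibniz, Lemma~2 with $\varepsilon_{1i},\varepsilon_{2i}$ to absorb the uncertainty and the oblique projection, Schur complements to recover $\Phi_{r11}$ and $\Phi_r$, and Dynkin's formula for stochastic stability. Your outline is in fact more complete than the paper's own proof of this theorem, which only argues stochastic stability and omits the regularity/impulse-free step that you correctly supply via the $(2,2)$-block argument (the paper defers that argument to the proof of Theorem~3).
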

\begin{proof}
	Choose the following Markovian switched Lyapunov functional candidate as
	\begin{equation}
		V(x(t),r(t),t)=V_1+V_2+V_3
	\end{equation}
	where
	\begin{equation}
		\begin{aligned}
			& V_1=y^T(t){\mathcal{\tilde{E}}}^{T} \left(I \otimes P_r\right)  {\mathcal{\tilde{E}}}y(t), \\
			& V_2=\int_{t-d}^t y^T(s)\left(I \otimes Q_r\right) y(s) \mathrm{d} s \\
			& V_3=\int_{t-d}^t \int_{t+\theta}^t \dot{y}^T(s)\left(I \otimes R_r\right) \dot{y}(s) \mathrm{d} s \mathrm{~d} \theta,
		\end{aligned}
	\end{equation}
	Let $\mathcal{D}$ refer to the weak infinitesimal operator of the SP $\{(y(t), r(t)), t \geq 0\}$. Then, at time $t$, the infinitesimal operator generating from the point $(y(t), r(t)=i, t)$ is characterized by
	\begin{equation}
\begin{split}
		\mathbf{E}[\mathcal{L}V(y(t), r(t), t)] 
		= &\lim _{\Delta \rightarrow 0} \frac{1}{\Delta}[\mathcal{E}\{V(y(t+\Delta), r(t+\Delta), t+\Delta) \mid y(t), r(t)=i\} \\&
		-V(y(t), r(t), t)], \quad i \in N
\end{split}
	\end{equation}
	For each $r(t)=i \in N$, adopting the law of total probability and conditional expectation, provides,
	\begin{equation}
		\begin{split}
			\mathbf{E}&[\mathcal{L}V_1(y(t), r(t), t)] = \lim_{\Delta \rightarrow 0} \frac{1}{\Delta} \Bigg[\mathbf{E} \Bigg\{\sum_{j=1,j \neq i}^{N} \mathrm{Pr}\Big\{r_{n+1}=j, h_{n+1} \leq h\\&+\Delta \mid r_n=r, h_{n+1}>h\Big\}
 y^{\top}(t+\Delta) {\mathcal{\tilde{E}}}^{T} P_j {\mathcal{\tilde{E}}} y(t+\Delta) \\
			&+ \mathrm{Pr}\Big\{r_{n+1}=r, h_{n+1}>h+\Delta \mid r_n=r, h_{n+1}>h\Big\} y^{T}(t+\Delta) {\mathcal{\tilde{E}}}^{T} P_r {\mathcal{\tilde{E}}}y(t+\Delta)\Bigg\}\\& - y^{T}(t) {\mathcal{\tilde{E}}}^{T} P_r {\mathcal{\tilde{E}}} y(t)\Bigg]\\
			=&  \lim _{\Delta \rightarrow 0} \frac{1}{\Delta}\Bigg[\mathbf{E} \Bigg\{\sum_{j=1, j \neq i}^N \frac{\operatorname{Pr}\left\{r_{n+1}=j, r_n=r\right\}}{\operatorname{Pr}\left\{r_n=r\right\}}\frac{\operatorname{Pr}\left\{h<h_{n+1} \leq h+\Delta \mid r_{n+1}=j, r_n=r\right\}}{\operatorname{Pr}\left\{h_{n+1}>h \mid r_n=r\right\}}
\\&\times y^{\top}(t+\Delta) {\mathcal{\tilde{E}}}^{T} P_j {\mathcal{\tilde{E}}}y(t+\Delta)+\frac{\operatorname{Pr}\left\{h_{n+1}>h+\Delta \mid r_n=r\right\}}{\operatorname{Pr}\left\{h_{n+1}>h \mid r_n=r\right\}}  y^{\top}(t+\Delta) {\mathcal{\tilde{E}}}^{T} P_r {\mathcal{\tilde{E}}} y(t+\Delta)\Bigg\}\\&-y^{\top}(t){\mathcal{\tilde{E}}}^{T} P_r {\mathcal{\tilde{E}}} y(t)\Bigg] \\
			=&  \lim _{\Delta \rightarrow 0} \frac{1}{\Delta}\Bigg[\mathbf{E} \Bigg\{\sum_{j=1, j \neq i}^N \frac{q_{i j}\left(G_i(h+\Delta)-G_i(h)\right)}{1-G_i(h)} y^{\top}(t+\Delta) {\mathcal{\tilde{E}}}^{T} P_j  {\mathcal{\tilde{E}}}y(t+\Delta)\\
			&+\frac{1-G_i(h+\Delta)}{1-G_i(h)} y^{\top}(t+\Delta) {\mathcal{\tilde{E}}}^{T} P_r {\mathcal{\tilde{E}}} y(t+\delta)\Bigg\}-y^{\top}(t) {\mathcal{\tilde{E}}}^{T} P_r {\mathcal{\tilde{E}}}y(t)\Bigg],
		\end{split}
	\end{equation}
	where $G_i(h)$ represents the cumulative distribution function(CDF) of the sojourn-time $h$ when the system stays in mode $i$, and $$q_{i j}:=\frac{\operatorname{Pr}\left\{r_{n+1}=j, r_n=i\right\}}{\operatorname{Pr}\left\{\left[r_n=i\right\}\right.}=\operatorname{Pr}\left\{r_{n+1}=j \mid r_n=\right.i)$$  refers to the probability intensity of the system switching from mode $i$ to mode $j$. With a small $\Delta$, the first-order approximation of $y(t+\Delta)$ is
	\begin{equation}
		{\mathcal{\tilde{E}}}y(t+\Delta)={\mathcal{\tilde{E}}}y(t)+\bar{\mathcal{A}}_i(t)y(t)\Delta+{\mathcal{\tilde{D}}}_iy(t)\Delta\omega(t), 
	\end{equation}
	Based upon Eq.(22) and Eq.(23), we have
	\begin{equation}
		\begin{aligned}
			\mathcal{L}V_1&(y(t), r(t), t) =\lim _{\Delta \rightarrow 0} \frac{1}{\Delta} \mathbf{E} \Bigg\{ \sum_{j=1, j \neq i}^N \frac{q_{i j}\left(G_i(h+\Delta)-G_i(h)\right)}{1-G_i(h)}\\&\times\bigg[y^{T}(t){\mathcal{\tilde{E}}}^{T}\tilde{\mathcal{P}}_{j}{\mathcal{\tilde{E}}}y(t)+2y^{T}(t)\bar{\mathcal{A}}_{i}^{T}(t)\tilde{\mathcal{P}}_{j}{\mathcal{\tilde{E}}}y(t)\Delta \\
			&+2y^{T}(t){\mathcal{\tilde{D}}}_{i}^{T}\tilde{\mathcal{P}}_{j}{\mathcal{\tilde{E}}}y(t)\Delta\omega(t)+2y^{T}(t)\bar{\mathcal{A}}_{i}^{T}(t)\tilde{\cal P}_{j}{\mathcal{\tilde{D}}}_{i}y(t)\Delta\omega(t)\Delta\\&+y^{T}(t)\bar{\mathcal{A}}_{i}^{T}(t)\tilde{\cal P}_{j}\bar{\mathcal{A}}_{i}(t)y(t)\Delta^{2} 
			+y^{T}(t){\mathcal{\tilde{D}}}_{i}^{T}\tilde{\mathcal{P}}_{j}{\mathcal{\tilde{D}}}_{i}y(t)\Delta^{2}\omega(t)\bigg]\\&+\frac{G_{i}(h)-G_{i}(h+\Delta)}{1-G_{i}(h)}y^{T}(t){\mathcal{\tilde{E}}}^{T}\tilde{\mathcal{P}}_{i}{\mathcal{\tilde{E}}}y(t)\\
			&+\frac{1-G_{i}(h+\Delta)}{1-G_{i}(h)}\bigg[2y^{T}(t)\bar{\mathcal{A}}_{i}^{T}(t)\tilde{\mathcal{P}}_{i}{\mathcal{\tilde{E}}}y(t)\Delta+2y^{T}(t){\mathcal{\tilde{D}}}_{i}^{T}\tilde{\mathcal{P}}_{i}{\mathcal{\tilde{E}}}y(t)\Delta\omega(t) \\
			&+2y^{T}(t)\bar{\mathcal{A}}_{i}^{T}(t)\tilde{\cal P}_{i}{\mathcal{\tilde{D}}}_{i}y(t)\Delta\omega(t)\Delta+y^{T}(t)\bar{\mathcal{A}}_{i}^{T}(t)\tilde{\cal P}_{i}\bar{\mathcal{A}}_{i}(t)y(t)\Delta^{2}\\&+y^{T}(t){\mathcal{\tilde{D}}}_{i}^{T}\tilde{\cal P}_{i}{\mathcal{\tilde{D}}}_{i}y(t)\Delta^{2}\omega(t)\bigg]\Bigg\} .	
		\end{aligned}
	\end{equation}
	From the property of CDF, we have
	\begin{equation}
		\begin{aligned}
			&\lim_{\Delta\to0}\frac{1-G_i(h+\Delta)}{1-G_i(h)}=1, \lim_{\Delta\to 0}\frac{G_i(h)-G_i(h+\Delta)}{1-G_i(h)}=0,\\& \lim_{\Delta\to 0}\frac{G_i(h+\Delta)-G_i(h)}{\Delta(1-G_i(h))}=\lambda_i(h).
		\end{aligned}
	\end{equation}
	With $\pi_{ij}(h)=q_{ij} \lambda_{j}(h),i\neq j$, we can obtain 
	\begin{equation}
		\begin{aligned}
			&\lim_{\Delta\rightarrow0}\frac{1}{\Delta}\sum_{j=1,j\neq i}^{\mathcal{N}}\frac{q_{ij}(G_{i}(h+\Delta)-G_{i}(h))}{1-G_{i}(h)}[y^{T}(t){\mathcal{\tilde{E}}}^{T}\tilde{\cal P}_{j}{\mathcal{\tilde{E}}}y(t)]\\
			&=\sum_{j=1,j\neq i}^{N}q_{ij}\lambda_{i}(h)[y^{T}(t){\mathcal{\tilde{E}}}^{T}\tilde{\mathcal{P}}_{j}\mathcal{E}y(t)] \\
			&=\sum_{i-1~i\neq i}^{N}\pi_{ij}(h)y^{T}(t){\mathcal{\tilde{E}}}^{T}\tilde{\cal P}_{j}{\mathcal{\tilde{E}}}y(t).
		\end{aligned}
	\end{equation}
	By $It\hat{o}'s$ formula, it is noted that
	\begin{equation}
		\begin{aligned}
			&\mathbf{E}\left\{\lim_{\Delta\rightarrow 0}\frac{1}{\Delta}\sum_{j=1,j\neq i}^{N}\frac{q_{ij}(G_{i}(h+\Delta)-G_{i}(h))}{1-G_{i}(h)}\left[2y^{T}(t){\mathcal{\tilde{D}}}_{i}^{T}\tilde{\mathcal{P}}_{j} \mathcal{E}y(t)\Delta\omega(t)\right]\right\}=0, \\
			&\mathbf{E}\left\{\lim_{\Delta\rightarrow 0}\frac{1}{\Delta}\sum_{j=1,j\neq i}^{N}\frac{q_{ij}(G_{i}(h+\Delta)-G_{i}(h))}{1-G_{i}(h)}\left[y^{T}(t){\mathcal{\tilde{D}}}_{i}^{T}\tilde{\mathcal{P}}_{j}{\mathcal{\tilde{D}}}_{i} y(t)\Delta^{2}\omega(t)\right]\right\}\\
			&=\mathbf{E}\left\{\lim_{\Delta\to0}\frac{1}{\Delta}\sum_{j=1,j\neq i}^{\mathcal{N}}\frac{q_{ij}(G_{i}(h+\Delta)-G_{i}(h))}{1-G_{i}(h)}\left[y^{T}(t){\mathcal{\tilde{D}}}_{i}^{T} \tilde{\cal P}_{j}{\mathcal{\tilde{D}}}_{i}y(t)\Delta\right]\right\}=0.
		\end{aligned}
	\end{equation}
	Thus, it follows from (24)-(27) that
	\begin{equation}
		\begin{aligned}
			\mathcal{L}V_1(y(t), r(t), t)=&y^T(t) \sum_{j=1}^s \pi_{i j}(b) \mathcal{\tilde{E}}^T \bar{P}_j y(t) +y^T(t) \bar{P}_i \bar{\mathcal{A}}_i +\bar{\mathcal{A}}_i^T  \bar{P}_i y(t)\\&+y(t)^T{\mathcal{\tilde{D}}}^{T}_{i} \tilde{\mathcal{P}}_i {\mathcal{\tilde{D}}}_{i}y(t) .
		\end{aligned}
	\end{equation}
	with 
	\begin{equation}
		\begin{aligned}
			\bar{\mathcal{A}}_i=&\mathcal{\tilde{A}}_{i}y(t)+\mathcal{\tilde{B}}_i\mathcal{\tilde{K}}_i\otimes\mathcal{H}y(t)+\Delta\mathcal{\tilde{A}}_{i}(t)y(t)-\mathcal{\tilde{B}}_i\left(\mathcal{\tilde{B}}_i^T\hat{\tilde{\mathcal{P}}}_i\mathcal{\tilde{B}}_i\right)^{-1} \mathcal{\tilde{B}}_i^T\hat{\tilde{\mathcal{P}}}_i \Delta \tilde{A}_{i}y(t),\\
			\bar{\mathcal{P}}_{i}=&\tilde{\cal P}_{i}{\mathcal{\tilde{E}}}+{\cal U}^{T}\tilde{\cal Q}_{i}{\cal V}^{T},\bar{\mathcal{P}}_{i}^{-1}=\hat{\cal P}_{i}{\mathcal{\tilde{E}}}^{T}+{\cal V}\hat{\cal Q}_{i}{\cal U},
		\end{aligned}
	\end{equation}
	On the other hand, for $V_2$ and $V_3$, we have
	\begin{equation}
		\begin{aligned}
			\mathcal{L}V_2(y(t), r(t), t)= &  y^T(t) \tilde{Q}_j y(t)+y^T(t-d_m) \tilde{Q}_j y(t-d_m) - y^T(t-h) \tilde{U}_i y(t-h)
		\end{aligned}
	\end{equation}
	and
	\begin{equation}
		\begin{aligned}
			\mathcal{L}V_3(y(t), r(t), t)=d \dot{y}^T(t) {\mathcal{\tilde{E}}}^{T}\tilde{R}{\mathcal{\tilde{E}}}\dot{y}(t)-\int_{t-d}^t \dot{y}^T(s){\mathcal{\tilde{E}}}^{T} \tilde{R}{\mathcal{\tilde{E}}}\dot{y}(s) \mathrm{d} s
		\end{aligned}
	\end{equation}
	According to Eq.(28)-(31),we derive
	\begin{equation}
		\begin{aligned}
			\mathcal{L}V(y(t), r(t), t)=&y^T(t) \sum_{j=1}^s \pi_{i j}(b) \mathcal{\tilde{E}}^T \bar{P}_j y(t) +y^T(t) \bar{P}_i \bar{\mathcal{A}}_i +\bar{\mathcal{A}}_i^T  \bar{P}_i y(t)\\&+y(t)^T{\mathcal{\tilde{D}}}^{T}_{i} \tilde{\mathcal{P}}_i {\mathcal{\tilde{D}}}_{i}y(t) 
			+y^{T}(t)\tilde{Q}_j y(t)-y^{T}(t-d)\tilde{Q}_j y(t-d)\\&+d\dot{y}^T(t) {\mathcal{\tilde{E}}}^{T}\tilde{R}{\mathcal{\tilde{E}}}\dot{y}(t)-\int_{t-d}^t \dot{y}^T(s){\mathcal{\tilde{E}}}^{T} \tilde{R}{\mathcal{\tilde{E}}}\dot{y}(s) \mathrm{d} s. 
		\end{aligned}\label{031}
	\end{equation}
	For $H$, $N$ with compatible dimensions:
	\begin{equation}
		\begin{aligned}
			&2\eta^T(t)N\left[{\mathcal{\tilde{E}}}y(t-d(t))-{\mathcal{\tilde{E}}}y(t-d)-\int_{t-d}^{t-d(t)}{\mathcal{\tilde{E}}}\dot{y}(s)ds\right]=0\\
			&2\eta^T(t)H\left[{\mathcal{\tilde{E}}}y(t)-{\mathcal{\tilde{E}}}y(t-d(t))-\int_{t-d(t)}^t{\mathcal{\tilde{E}}}\dot{y}(s)ds\right]=0\\
		\end{aligned}\label{032}
	\end{equation}
	where
	\begin{equation}
		\begin{aligned}
			N&\triangleq\left[{N_1}^T \,{N_2}^T \,{N_3}^T\,0 \,\,0\right]^T, H\triangleq\left[{H_1}^T\,{H_2}^T\,{H_3}^T\,0 \,\,0\right]^T,\\
			\eta^T(t)&\triangleq\left[y^T(t)\, y^T(t-d(t))\, y^T(t-d)\,\int_{t-d(t)}^t{\dot{y}^T{}\mathcal{\tilde{E}}}^{T}(s)ds\,\,\dot{y}\right].
		\end{aligned}
	\end{equation}
	According to Lemma 1 of , we have
	\begin{equation}
		\begin{aligned}
			2y^{T}(t)\bar{\mathcal{P}}^{T}_{i}\Delta\mathcal{\tilde{A}}_{i}(t)y(t) \leq& \varepsilon_{1i}^{-1}y^{T}(t)\bar{\mathcal{P}}^{T}_{i}{\cal M}_{i}{\cal M}_{i}^{T}\bar{\mathcal{P}}_{i}y(t)\\&+\varepsilon_{1i}y^{T}(t){\cal N}_{i}^{T}{\cal N}_{i}y(t),  \\
			-2y^{T}(t)\bar{\mathcal{P}}^{T}_{i}\mathcal{\tilde{B}}_i\left(\mathcal{\tilde{B}}_i^T\hat{\tilde{\mathcal{P}}}_i\mathcal{\tilde{B}}_i\right)^{-1} \mathcal{\tilde{B}}_i^T\hat{\tilde{\mathcal{P}}}_i \Delta \tilde{A}_{i} \leq& \varepsilon_{2i}^{2}y^{T}(t)\mathcal{\tilde{B}}_i\left(\mathcal{\tilde{B}}_i^T\hat{\tilde{\mathcal{P}}}_i\mathcal{\tilde{B}}_i\right)^{-1} \mathcal{\tilde{B}}_i^Ty(t)  \\&+\frac{1}{\varepsilon_{2i}^{2}}{\cal N}_{i}^{T}{\cal F}_{i}^{T}(t){\cal M}_{i}^{T}\hat{\tilde{\mathcal{P}}}_i{\cal M}_{i}{\cal F}_{i}(t){\cal N}_{i}.
		\end{aligned}\label{033}
	\end{equation}
	From  \eqref{032}–\eqref{033}, Eq.\eqref{031} is simplified as 
	\begin{equation}
		\begin{aligned}
			\mathcal{L}&V(y(t), r(t), t)=y^T(t) \sum_{j=1}^s \pi_{i j}(b) \mathcal{\tilde{E}}^T \bar{P}_j y(t)\\& +y^T(t) \bar{P}_i \bar{\mathcal{A}}_i +\bar{\mathcal{A}}_i^T  \bar{P}_i y(t)\\&+y(t)^T{\mathcal{\tilde{D}}}^{T}_{i} \tilde{\mathcal{P}}_i {\mathcal{\tilde{D}}}_{i}y(t)+y^{T}(t)\tilde{Q}_j y(t) +d\dot{y}^T(t) {\mathcal{\tilde{E}}}^{T}\tilde{R}{\mathcal{\tilde{E}}}\dot{y}(t)\\
			&+2\eta^T(t)N\left[{\mathcal{\tilde{E}}}y(t-d(t))-{\mathcal{\tilde{E}}}y(t-d)-\int_{t-d}^{t-d(t)}{\mathcal{\tilde{E}}}\dot{y}(s)ds\right]\\&+2\eta^T(t)H\left[{\mathcal{\tilde{E}}}y(t)-{\mathcal{\tilde{E}}}y(t-d(t))-\int_{t-d(t)}^t{\mathcal{\tilde{E}}}\dot{y}(s)ds\right]\\
			&-y^{T}(t-d)\tilde{Q}_j y(t-d)-\int_{t-d}^t \dot{y}^T(s){\mathcal{\tilde{E}}}^{T} \tilde{R}{\mathcal{\tilde{E}}}\dot{y}(s)ds\\
			\leq&y^T(t) \sum_{j=1}^s \pi_{i j}(b) \mathcal{\tilde{E}}^T \bar{P}_j y(t) +2y^T(t) \bar{P}_i \left[\mathcal{\tilde{A}}_{i}+\mathcal{\tilde{B}}_i\mathcal{\tilde{K}}_i\otimes\mathcal{H}\right]+y(t)^T{\mathcal{\tilde{D}}}^{T}_{i} \tilde{\mathcal{P}}_i {\mathcal{\tilde{D}}}_{i}y(t)\\&+y^{T}(t)\tilde{Q}_j y(t) 
			+\varepsilon_{1i}^{-1}y^{T}(t)\bar{\mathcal{P}}^{T}_{i}{\cal M}_{i}{\cal M}_{i}^{T}\bar{\mathcal{P}}_{i}y(t)\\&+\varepsilon_{1i}y^{T}(t){\cal N}_{i}^{T}{\cal N}_{i}y(t)+\varepsilon_{2i}^{2}y^{T}(t)\mathcal{\tilde{B}}_i\left(\mathcal{\tilde{B}}_i^T\hat{\tilde{\mathcal{P}}}_i\mathcal{\tilde{B}}_i\right)^{-1} \mathcal{\tilde{B}}_i^Ty(t)  \\
			&+\frac{1}{\varepsilon_{2i}^{2}}{\cal N}_{i}^{T}{\cal F}_{i}^{T}(t){\cal M}_{i}^{T}\hat{\tilde{\mathcal{P}}}_i{\cal M}_{i}{\cal F}_{i}(t){\cal N}_{i}+d\dot{y}^T(t) {\mathcal{\tilde{E}}}^{T}\tilde{R}{\mathcal{\tilde{E}}}\dot{y}(t)\\&-\frac{1}{d}\left[\int_{t-d}^t \dot{y}^T(s){\mathcal{\tilde{E}}}^{T}ds\right]\tilde{R}\left[\int_{t-d}^t{\mathcal{\tilde{E}}}\dot{y}(s)ds\right]\\
			&-y^{T}(t-d)\tilde{Q}_j y(t-d)+2\eta^T(t)N\left[{\mathcal{\tilde{E}}}y(t-d(t))-{\mathcal{\tilde{E}}}y(t-d)-\int_{t-d}^{t-d(t)}{\mathcal{\tilde{E}}}\dot{y}(s)ds\right]\\
			&+2\eta^T(t)H\left[{\mathcal{\tilde{E}}}y(t)-{\mathcal{\tilde{E}}}y(t-d(t))-\int_{t-d(t)}^t{\mathcal{\tilde{E}}}\dot{y}(s)ds\right]\\
			=&\eta^T(t)\Phi_{r}\eta(t)
		\end{aligned}
	\end{equation}
	with $\Phi_{r 11}<0$. Inequation (19) implies $\Phi_i<0$.
	Hence, there exits a scalar $\lambda>0$ such that
	\begin{equation}
		\begin{aligned}
			\mathcal{L}V(y(t), r(t), t) \leq-\lambda\|y(t)\|^2 .
		\end{aligned}
	\end{equation}
	Thus, by Dynkin's formula, we know for any $t>0$
	\begin{equation}
		\mathbf{E} \left\{\int_0^t\|y(s)\|^2 d s\right\} \leq \lambda^{-1} \mathbf{E}\left\{ V(y(0), r(0))\right\} 
	\end{equation}
	Then, from Definition \ref{def1}, the sliding mode dynamics (\ref{edx}) is stochastically stable. This completes the proof.
\end{proof}
\subsection{SMC law design}
The sliding mode control law is designed to study the reachability of the ITSMSs $s(t) = 0$.  The following theorem guarantees that the state responses of dynamic system can be driven onto the predefined sliding switching surfaces in finite time.The distributed formation tracking protocol for follower agent is proposed as
\begin{equation}
	\begin{aligned}
		u(t)= & \tilde{\mathcal{K}}_i \otimes \mathcal{H} y(t)-\rho\left(\left\|\left(\mathcal{\tilde{B}}_i^T\hat{\tilde{\mathcal{P}}}_i\mathcal{\tilde{B}}_i\right)^{-1} \mathcal{\tilde{B}}_i^T\hat{\tilde{\mathcal{P}}}_i \mathcal{M}_i\right\|\left\|\mathcal{N}_i y(t)\right\|\right.\\
		&\left. +\frac{1}{2}\left\|\sum_{j=1, j \neq i}^{\mathcal{N}} \pi_{i j}\left(\mathcal{\tilde{B}}_i^T\hat{\tilde{\mathcal{P}}}_i\mathcal{\tilde{B}}_i\right)^{-1}\right\|\|s(t)\|+\alpha_i\|y(t)\|+\vartheta_i \right) \operatorname{sgn}(s(t))
	\end{aligned}\label{0042}
\end{equation}
with positive constants $\vartheta_i$ and $\tilde{\mathcal{K}}=\rho\mathcal{K} ,\rho=\eta^{-1}$.
\begin{theorem}
	Under protocol (\ref{0042}), if theorem 1 is satisfied and the control parameter $\vartheta>0$, then the multi-agent system (\ref{edx}) can achieve ideal formation tracking with error boundary $\theta = 0$.
\end{theorem}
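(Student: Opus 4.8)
The plan is to split the argument into two stages. \emph{Stage one} establishes finite-time reachability of the integral sliding surface $s(t)=0$ under protocol \eqref{0042}; \emph{stage two} notes that on $s(t)=0$ the closed loop is exactly the sliding-mode dynamics \eqref{edx}, which Theorem~1 certifies to be stochastically admissible, whence $y(t)\to0$ and so the formation error bound is $\theta=0$.

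For stage one I would use the mode-dependent candidate $V(s(t),r(t)=i)=\tfrac12 s^{T}(t)(\mathcal{\tilde{B}}_i^{T}\hat{\tilde{\mathcal{P}}}_i\mathcal{\tilde{B}}_i)^{-1}s(t)$, which is positive definite because $\hat{\tilde{\mathcal{P}}}_i>0$ and $\mathcal{\tilde{B}}_i$ has full column rank. Since the surface gain is chosen with $\mathcal{B}_i^{T}\hat{\mathcal{P}}_i\mathcal{D}_i=0$, the It\^o differential of $s(t)$ has no diffusion part, so $s(t)$ obeys the ordinary differential equation obtained by differentiating \eqref{00012} along \eqref{007} with $u^{f}(t)=\eta u(t)+u_b(t)$; substituting \eqref{0042} and using $\rho=\eta^{-1}$, the linear feedback term in \eqref{0042} cancels the surface-feedback term, and what remains on the right-hand side of $\dot s$ is the discontinuous action $-\mathcal{\tilde{B}}_i^{T}\hat{\tilde{\mathcal{P}}}_i\mathcal{\tilde{B}}_i\rho(\cdot)\operatorname{sgn}(s(t))$ together with the matched perturbation terms $\mathcal{\tilde{B}}_i^{T}\hat{\tilde{\mathcal{P}}}_i\Delta\tilde{A}_i y$, $\mathcal{\tilde{B}}_i^{T}\hat{\tilde{\mathcal{P}}}_i\mathcal{\tilde{B}}_i f_i$, $\mathcal{\tilde{B}}_i^{T}\hat{\tilde{\mathcal{P}}}_i\mathcal{\tilde{B}}_i u_b$. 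Applying the weak infinitesimal operator $\mathcal{L}$ to $V$ produces, besides $s^{T}(t)(\mathcal{\tilde{B}}_i^{T}\hat{\tilde{\mathcal{P}}}_i\mathcal{\tilde{B}}_i)^{-1}\dot s(t)$, a jump term $\tfrac12\sum_{j\neq i}\pi_{ij}s^{T}(t)[(\mathcal{\tilde{B}}_j^{T}\hat{\tilde{\mathcal{P}}}_j\mathcal{\tilde{B}}_j)^{-1}-(\mathcal{\tilde{B}}_i^{T}\hat{\tilde{\mathcal{P}}}_i\mathcal{\tilde{B}}_i)^{-1}]s(t)$ coming from the mode dependence of the weighting. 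The three explicit pieces of the robust gain $\rho(\cdot)$ are then exactly what is needed to dominate these: the first, via $\Delta\tilde{A}_i=\mathcal{M}_i\mathcal{F}_i(t)\mathcal{N}_i$, $\mathcal{F}_i^{T}\mathcal{F}_i<I$ and Lemma~2, bounds $\|(\mathcal{\tilde{B}}_i^{T}\hat{\tilde{\mathcal{P}}}_i\mathcal{\tilde{B}}_i)^{-1}\mathcal{\tilde{B}}_i^{T}\hat{\tilde{\mathcal{P}}}_i\mathcal{M}_i\|\,\|\mathcal{N}_i y\|$; the second, proportional to $\|s\|$, absorbs the jump term; the $\alpha_i\|y\|$ term covers the nonlinearity bound $|f_i|\le\kappa|y|$; and $\vartheta_i>0$ dominates the bounded bias $u_b$ and leaves a strict margin. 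With $s^{T}\operatorname{sgn}(s)=\|s\|_1\ge\|s\|$ this yields $\mathbf{E}\{\mathcal{L}V\}\le-\varrho\|s(t)\|$ for some $\varrho>0$, and a comparison/Dynkin argument then forces $s(t)=0$ after a finite time $t^{*}$, after which $s(t)\equiv0$.

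For stage two, once $s(t)=0$ and $\dot s(t)=0$ the equivalent control \eqref{ueq} applies, so the motion for $t\ge t^{*}$ is governed by \eqref{edx}. Under the hypotheses of Theorem~1, \eqref{edx} is stochastically admissible; in particular stochastic stability gives $\mathbf{E}\{\int_{0}^{\infty}\|y(s)\|^{2}ds\}<\infty$, and since regularity and the impulse-free property preclude jumps in $\mathcal{\tilde{E}}y$, it follows that $y(t)\to0$ as $t\to\infty$. Because $y_i(t)=x_i(t)-x_0(t)-h_i$ this is precisely $\lim_{t\to\infty}\|x_i-h_i-x_0\|=0$, i.e. $\theta=0$. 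The main obstacle is stage one: one must confirm that the matrices supplied by Theorem~1 do satisfy $\mathcal{B}_i^{T}\hat{\mathcal{P}}_i\mathcal{D}_i=0$ (so that $\dot s$ is genuinely diffusion-free and the finite-time estimate is legitimate) and carefully account for the extra cross term coming from the Markovian jumps of the mode-dependent surface matrices, verifying that every term of the switched, uncertain and faulty right-hand side of $\dot s$ is truly matched and is dominated by the corresponding component of $\rho(\cdot)$.
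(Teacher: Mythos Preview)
Your proposal is correct and follows essentially the same approach as the paper: the same mode-dependent Lyapunov candidate $V=\tfrac12 s^{T}(\mathcal{\tilde B}_i^{T}\hat{\tilde{\mathcal P}}_i\mathcal{\tilde B}_i)^{-1}s$, the same computation of $\mathcal{L}V$ with the Markovian jump term, substitution of the control law \eqref{0042} to obtain $\mathcal{L}V\le-\vartheta_i\|s(t)\|$, and then invocation of Theorem~1 for the motion on the surface. If anything you are more explicit than the paper about the role of the hypothesis $\mathcal{B}_i^{T}\hat{\mathcal P}_i\mathcal{D}_i=0$ (which the paper assumes tacitly in deriving \eqref{St}) and about the second stage linking stochastic admissibility of \eqref{edx} to $\theta=0$.
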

\begin{proof}
	Based on the sliding function $s(t)$, an appropriate Lyapunov function is chosen such that it acts on the infinitesimal operator $\mathcal{L}$ applied to the function $V(s(t),r(t),t)$.
	\begin{equation}
		V(s(t),r(t),t)=\frac{1}{2}s^{T}(t)\left(\mathcal{\tilde{B}}_i^T\hat{\tilde{\mathcal{P}}}_i\mathcal{\tilde{B}}_i\right)^{-1}s(t)
	\end{equation}
	From (10), the weak infinitesimal operator $\mathcal{L}$ of $V(s(t), r(t), i)$ is given by
	\begin{equation}
		\begin{aligned}
			\mathcal{L}V(s(t), i)=&\lim _{\Delta \rightarrow 0} \frac{1}{\Delta}\left[\frac{1}{2} \sum_{j=1, j \neq i}^s \mathrm{Pr}\left\{r(t+\Delta)=j \mid r(t)=i\right\}\right.\\ &\left. s^T(t+\Delta) \left(\mathcal{\tilde{B}}_i^T\hat{\tilde{\mathcal{P}}}_i\mathcal{\tilde{B}}_i\right)^{-1}s(t+\Delta)+\frac{1}{2} \mathrm{Pr}\left\{r(t+\Delta)=i \mid r(t)=i\right\}\right.\\
			& \left.\times s^T(t+\Delta)\left(\mathcal{\tilde{B}}_i^T\hat{\tilde{\mathcal{P}}}_i\mathcal{\tilde{B}}_i\right)^{-1} s(t+\Delta)-\frac{1}{2} s^T(t)\left(\mathcal{\tilde{B}}_i^T\hat{\tilde{\mathcal{P}}}_i\mathcal{\tilde{B}}_i\right)^{-1}s(t)\right] \\
			=&\lim _{\Delta \rightarrow 0} \frac{1}{\Delta}\left[\frac{1}{2} \sum_{j=1, j \neq i}^s \frac{q_{i j}\left(G_i(b+\Delta)-G_i(b)\right)}{1-G_i(b)} s^T(t+\Delta) \left(\mathcal{\tilde{B}}_i^T\hat{\tilde{\mathcal{P}}}_i\mathcal{\tilde{B}}_i\right)^{-1}\right.\\ &\left. \times s(t+\Delta)+ \frac{1-G_i(b+\Delta)}{2(1-G_i(b))} s^T(t+\Delta) \right.\\
			& \left.\times\left(\mathcal{\tilde{B}}_i^T\hat{\tilde{\mathcal{P}}}_i\mathcal{\tilde{B}}_i\right)^{-1} s(t+\Delta)-\frac{1}{2} s^T(t)\left(\mathcal{\tilde{B}}_i^T\hat{\tilde{\mathcal{P}}}_i\mathcal{\tilde{B}}_i\right)^{-1} s(t)\right] . \label{uLV1}
		\end{aligned}
	\end{equation}
	From Eq.($\ref{uLV1}$), we have
	\begin{equation}
		\begin{aligned}
			\mathcal{L} V(s(t), i)= & s^T(t)\left(\mathcal{\tilde{B}}_i^T\hat{\tilde{\mathcal{P}}}_i\mathcal{\tilde{B}}_i\right)^{-1} \dot{s}(t) +\frac{1}{2} s^T(t) \sum_{j=1}^s \pi_{r j}(b)\left(\mathcal{\tilde{B}}_i^T\hat{\tilde{\mathcal{P}}}_i\mathcal{\tilde{B}}_i\right)^{-1} s(t) .
		\end{aligned}\label{0045}
	\end{equation}
	On the basis of Eq.($\ref{St}$), we get
	\begin{equation}
		\begin{aligned}
			\dot{S}(t)= & \mathcal{\tilde{B}}_i^T\hat{\tilde{\mathcal{P}}}_i\left[\left(\Delta \tilde{A}_i(t)-\mathcal{\tilde{B}}_i\mathcal{\tilde{K}}_i\otimes\mathcal{H}\right) y(t)+\mathcal{\tilde{B}}_i\left(\eta u(t)+u_b(t)+f_i\right)\right] .
		\end{aligned}\label{0046}
	\end{equation}
	Then, from Eq.($\ref{0045}$) and ($\ref{0046}$) such that
	\begin{equation}
		\begin{aligned}
			\mathcal{L}& V(s(t), r)=  s^T(t)\left(\mathcal{\tilde{B}}_i^T\hat{\tilde{\mathcal{P}}}_i\mathcal{\tilde{B}}_i\right)^{-1} \dot{s}(t)+\frac{1}{2} s^T(t)\sum_{j=1}^s \pi_{r j}(b)\left(\mathcal{\tilde{B}}_i^T\hat{\tilde{\mathcal{P}}}_i\mathcal{\tilde{B}}_i\right)^{-1} s(t) \\
			=&s^{T}(t)\left(\mathcal{\tilde{B}}_i^T\hat{\tilde{\mathcal{P}}}_i\mathcal{\tilde{B}}_i\right)^{-1}\mathcal{\tilde{B}}_i^T\hat{\tilde{\mathcal{P}}}_i\left[\left(\Delta \tilde{A}_i(t)-\mathcal{\tilde{B}}_i\mathcal{\tilde{K}}_i\otimes\mathcal{H}\right) y(t) \right.\\&\left.+\mathcal{B}_{i}\left\{\mathcal{K}_{i}\otimes\mathcal{H}y(t)-\left(\left \| \left(\mathcal{\tilde{B}}_i^T\hat{\tilde{\mathcal{P}}}_i\mathcal{\tilde{B}}_i\right)^{-1}\mathcal{\tilde{B}}_i^T\hat{\tilde{\mathcal{P}}}_i\mathcal{M}_{i} \right \|   \right.\right.\right.\\
			&\left.\left.\left.\times\left \| \mathcal{N}_{i} \right \| \left \| y(t) \right \| +\frac{1}{2}\left \|\sum_{j=1,j\neq i}^{\mathcal{N}}\pi_{ij}\left(\mathcal{\tilde{B}}_i^T\hat{\tilde{\mathcal{P}}}_i\mathcal{\tilde{B}}_i\right)^{-1}\right \| \left \|  s(t)\right \| +\alpha_{i}\left \|  y(t)\right \| +\vartheta_{i}\right)\right.\right.\\&\left.\left. \mathrm{sgn}(s(t))+f_{i}\right\}\right]+\frac{1}{2}s^{T}(t)\sum_{j=1}^{\mathcal{N}}\pi_{ij}(h)\left(\mathcal{\tilde{B}}_i^T\hat{\tilde{\mathcal{P}}}_i\mathcal{\tilde{B}}_i\right)^{-1}s(t)\\
			\leq&-\vartheta_{i}\|s(t)\|.
		\end{aligned}
	\end{equation}
	For each agent $i$, it can reach the sliding surface $S(t) = 0$ in finite time and stay on the sliding surface. It can be seen that the multi-agent system ($\ref{edx}$) achieves ideal formation tracking with an error bound $\theta = 0$. The proof is complete.
\end{proof}
\subsection{Event-triggered sliding mode analysis}
In network communication, an event generator is constructed between a sampler and Zero-order-hold $(\mathrm{ZOH})$ to decide whether to transmit the sampled data $y((k+l) h)$ to the controller through the judgment rule as follows:
\begin{equation}
	\begin{aligned}
		{[y((k+l) h)-y(k h)]^T \Phi_i[y((k+l) h)-y(k h)]}  \leq \sigma(i) y^T((k+l) h) \Phi_i y((k+l) h),
	\end{aligned}
\end{equation}
where $l=1,2, \ldots, \sigma(i) \in[0,1)$ are known triggering parameters and $\Phi_i$ are symmetric positive definite matrices.
On time interval $[t_k, t_{k+1})$ for $k \in N$, define the following sub-intervals: 
\begin{equation}
	\begin{aligned}
		\left\{\begin{array}{l}
			\mathrm{M}_{0, k}=\left[t_k h+\delta_k, t_k h+h+\bar{\delta}\right) \\
			\mathrm{M}_{n, k}=\left[t_k h+n h+\bar{\delta}, t_k h+n h+h+\bar{\delta}\right) \\
			\mathrm{M}_{d, k}=\left[t_k h+d h+\bar{\delta}, t_{k+1} h+\delta_{k+1}\right)
		\end{array}\right.
	\end{aligned}\label{0047}
\end{equation}
where $n=1,2, \ldots, d-1$, and $d$ is constructed such that
\begin{equation}
	t_k h+d h+\bar{\delta}<t_{k+1} h+\delta_{k+1} \leq t_k h+d h+h+\bar{\delta},
\end{equation}
Define $\delta_M \triangleq h+\bar{\delta} $, and
\begin{equation}
	\begin{aligned}
		\delta(t)= \begin{cases}
			t-t_k h, & t \in \mathrm{M}_{0, k}, \\
			t-t_k h-n h & t \in \mathrm{M}_{n, k}, \\
			t-t_k h-d h & t \in \mathrm{M}_{d, k} .
		\end{cases}
	\end{aligned}
\end{equation}
It implies that
\begin{equation}
	\begin{aligned}
		\begin{cases}
			\delta_k \leq \delta(t) \leq h+\bar{\delta}, & t \in \mathrm{M}_{0, k} \\ 
			\delta_k \leq \bar{\delta} \leq \delta(t) \leq h+\bar{\delta}, & t \in \mathrm{M}_{n, k} \\
			\delta_k \leq \bar{\delta} \leq \delta(t) \leq h+\bar{\delta}, & t \in \mathrm{M}_{d, k}
		\end{cases}
	\end{aligned}
\end{equation}
Based on the analysis above and considering the effect of the network-induced transmission delay, here provides the following two case:
\begin{case}
	If $t_k h+h+\bar{\delta} \geq t_{k+1} h+\delta_{k+1}$, denote network-induced time-delay $\delta(t)$ as: $\delta(t)=t-t_k h, t \in\left[t_k h+\delta_k, t_{k+1} h+\delta_{k+1}\right)$, and $e_{k}(t)=0$.
\end{case}
\begin{case}
	If $t_k h+h+\bar{\delta}<t_{k+1} h+\delta_{k+1}$, define
	\begin{equation}
		e_k(t)= \begin{cases}
			0, & t \in \mathrm{M}_0, \\ 
			y\left(t_k h+n h\right)-y\left(t_k h\right), & t \in \mathrm{M}_n, \\
			y\left(t_k h+d h\right)-y\left(t_k h\right), & t \in \mathrm{M}_d
		\end{cases}\label{0051}
	\end{equation}
\end{case}
Assuming that the initial output $y(0)$ is transmitted successfully, it follows that the next release time instant $i_{k+1}$ of the event trigger can be determined by
\begin{equation}
	\begin{aligned}
		t_{k+1}^i=&t_k^i+\min_{l\geq1}\left\{lh| {[y((k+l) h)-y(k h)]^T \Phi_i[y((k+l) h)-y(k h)]}\right. \\&\left. > \sigma(i) y^T((k+l) h) \Phi_i y((k+l) h)\right\}.
	\end{aligned}	
\end{equation}
If the transmission delay exists and $t\in[t_kh+\tau_{t_k},t_{k+1}h+\tau_{t_{k+1}}]$, then the trajectory of the generalized system can be driven to the actual switching surface according to SMC law:
\begin{equation}
	\begin{aligned}
		u(t) = & \mathcal{K}_i \otimes \mathcal{H} y(t_kh) \\
		& - \left\{ \left\| \left(\mathcal{B}_i^T \hat{\mathcal{P}}_i \mathcal{B}_i\right)^{-1} \mathcal{B}_i^T \hat{\mathcal{P}}_i \mathcal{M}_i \right\| \left\| \mathcal{N}_i \right\| \|y(t_kh)\| + \frac{1}{2} \left\| \sum_{j=1, j \neq i}^{\mathcal{N}} \pi_{i j} \left(\mathcal{B}_j^T \hat{\mathcal{P}}_j \mathcal{B}_j\right)^{-1} \right\| \right.
\\&\left. \times\|s(t_kh)\|+ \alpha_i \|y(t_kh)\| + v \right\} \operatorname{sgn}(s(t_kh))
	\end{aligned}
\end{equation}
with $v\ge\|\mathcal{\tilde{B}}_i^T\hat{\tilde{\mathcal{P}}}_i\mathcal{\tilde{B}}_i\mathcal{\tilde{K}}_i\otimes \mathcal{H} \|\|e(t_kh)\|$,the controller is updated.
From $\delta(t)$ and $e_k(t)$, the resultant event-triggered SMC dynamics in \eqref{edx} are deduced as
\begin{equation}
	\begin{aligned}
		{\mathcal{\tilde{E}}}d y(t)=& \left(\mathcal{\tilde{A}}_{i}+\Delta\mathcal{\tilde{A}}_{i}(t)-\mathcal{\tilde{B}}_i\left(\mathcal{\tilde{B}}_i^T\hat{\tilde{\mathcal{P}}}_i\mathcal{\tilde{B}}_i\right)^{-1} \mathcal{\tilde{B}}_i^T\hat{\tilde{\mathcal{P}}}_i \Delta \tilde{A}_{i}\right)y(t)+\mathcal{\tilde{B}}_i\mathcal{\tilde{K}}_i\otimes\mathcal{H}y(t-d)\\&+\mathcal{\tilde{B}}_i\mathcal{\tilde{K}}_i\otimes\mathcal{H}e_{k}(t) d t+{\mathcal{\tilde{D}}}_{i}y(t)d\omega(t), \label{edy2}
	\end{aligned}
\end{equation}
Let $\mathcal{\tilde{G}}_{i}=I-\mathcal{\tilde{B}}_i\left(\mathcal{\tilde{B}}_i^T\hat{\tilde{\mathcal{P}}}_i\mathcal{\tilde{B}}_i\right)^{-1} \mathcal{\tilde{B}}_i^T\hat{\tilde{\mathcal{P}}}_i$, it follows that
\begin{equation}
	\begin{aligned}
		{\mathcal{\tilde{E}}}d y(t)=& \left(\left(\mathcal{\tilde{A}}_{i}+\mathcal{\tilde{G}}_{i}\Delta\mathcal{\tilde{A}}_{i}(t)\right)y(t)+\mathcal{\tilde{B}}_i\mathcal{\tilde{K}}_i\otimes\mathcal{H}y(t-d)+\mathcal{\tilde{B}}_i\mathcal{\tilde{K}}_i\otimes\mathcal{H}e_{k}(t)\right) d t\\&+{\mathcal{\tilde{D}}}_{i}y(t)d\omega(t), \label{edy3}
	\end{aligned}
\end{equation}
\begin{remark}
	Through the previous analysis, system (\ref{edy3}) is converted into a time-delay system due to the network-induced delay. According to triggered schemes \eqref{0047}-\eqref{0051}, for $t \in\left[t_k h+\delta_k, t_{k+1} h+\right.$ $\left.\delta_{k+1}\right)$, we can derive that
	\begin{equation}
		e_k^T(t) \Phi e_k(t) \leq\sigma(i)[y\left(t-d(t)\right)]^T\Phi\left[y\left(t-d(t)\right)\right]
	\end{equation}
\end{remark}
In this section, an integral sliding surface is designed. First, an event-based sliding mode controller is proposed for the closed-loop system. Lyapunov stability theory is used to analyze the stability of the whole system and the accessibility of the sliding mode surface under the event-triggered sliding mode controller.
\begin{theorem}
	If there exist matrices $M_1, M_2, M_3, Q_1>0, Q_2>0, Q>0, R>0, \Phi>0$, nonsingular matrices $P$, positive scalars $\gamma,\sigma_1, d_m, d_M,\varepsilon_i, \forall i \in S$, and
	\begin{equation}
		\begin{aligned}
			E^T P  =P^T E &\geq 0 , {\cal B}_{i}^{T}{\hat{\cal P}}_{i}{\mathcal{\tilde{D}}}_{i}=0,
			\begin{bmatrix}
				-\hat{\cal P}_i&\hat{\cal P}_i\mathcal{M}_i\\
				* & -\varepsilon_{2i}\mathcal{I}\end{bmatrix}&<0,
		\end{aligned}
	\end{equation}
	\begin{equation}
		\begin{aligned}
			\varphi_k  =\left[\begin{array}{ccccc}
				\Xi_{11}+\Gamma+\Gamma^T & M & L_2 & L_1 & \bar{\tau} \ell R \\
				* & -R & 0 & 0 & 0 \\
				* & * & -\gamma^2 I & \bar{H}^T & \bar{\tau} \bar{G}^T R \\
				* & * & * & -I & 0 \\
				* & * & * & * & -\bar{\tau} R
			\end{array}\right]<0,\\
		\end{aligned}
	\end{equation}
	where
	\begin{equation}
		\begin{aligned}
			&L_1=\left[\begin{array}{lllll}
				C & D K & 0 & 0 & D K
			\end{array}\right]^T ,
			L_2=\left[\begin{array}{lllll}
				\bar{G}^T P & 0 & 0 & 0 & 0
			\end{array}\right]^T,\\& \ell=\left[\begin{array}{lllll}
				A & B K & 0 & 0 & B K
			\end{array}\right]^T \\
			&\Gamma=\left[M_1 E ,\left(M_3-M_2\right) E,\left(M_2-M_1\right) E,-M_3 E, 0\right] , \\&M_1=\left[\begin{array}{lllll}
				F_1{ }^T \quad F_2{ }^T\quad F_3{ }^T\quad F_4{ }^T\quad F_5{ }^T
			\end{array}\right]^T, \\
			&M_2=\left[\begin{array}{lll}
				N_1{ }^T \quad N_2{ }^T \quad N_3{ }^T\quad N_4{ }^T\quad N_5{ }^T
			\end{array}\right]^T , \\&M_3=\left[\begin{array}{lllll}
				Z_1{ }^T & Z_2{ }^T & Z_3{ }^T & Z_4{ }^T & Z_5{ }^T
			\end{array}\right]^T ,\\
			&M=\left[\left(\Theta_1^k\right)^T\left(\Theta_2^k\right)^T\left(\Theta_3^k\right)^T\left(\Theta_4^k\right)^T\left(\Theta_5^k\right)^T\right]^T, \Theta_1^1=\sqrt{d_m} F_1+\left(\sqrt{\bar{d}-d_m}\right) N_1,\\
			&  \Theta_1^2=\sqrt{d_m} F_1+\left(\sqrt{\bar{d}-d_m}\right) Z_1 , \Theta_2^1=\sqrt{d_m} F_2+\left(\sqrt{\bar{d}-d_m}\right) N_2,
\\&\Theta_2^2=\sqrt{d_m} F_2+\left(\sqrt{\bar{d}-d_m}\right) Z_2, \Theta_3^1=\sqrt{d_m} F_3+\left(\sqrt{\bar{d}-d_m}\right) N_3,\\& \Theta_3^2=\sqrt{d_m} F_3+\left(\sqrt{\bar{d}-d_m}\right) Z_3 ,\Theta_4^1=\sqrt{d_m} F_4+\left(\sqrt{\bar{d}-d_m}\right) N_4,\\
			&  \Theta_4^2=\sqrt{d_m} F_4+\left(\sqrt{\bar{d}-d_m}\right) Z_4, \Theta_5^1=\sqrt{d_m} F_5+\left(\sqrt{\bar{d}-d_m}\right) N_5,\\&\Theta_5^2=\sqrt{d_m} F_5+\left(\sqrt{\bar{d}-d_m}\right) Z_5 .
		\end{aligned}
	\end{equation}
	with $U$ and $V$ being any matrices with full rank and satisfying $UE = 0$ and $EV = 0$, then the event-trigger sliding mode system ($\ref{edy3}$) is stochastically admissible.  Moreover, the controller gains are computed by $K_i = Y_i X^{-1}$
\end{theorem}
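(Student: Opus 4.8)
The plan is to mirror the proof of Theorem 1 but applied to the delayed, event-triggered closed loop \eqref{edy3}, with the triggering error controlled by $e_k^T(t)\Phi e_k(t)\le\sigma(i)\,y^T(t-d(t))\Phi\,y(t-d(t))$ and the network-induced delay handled through an augmented Lyapunov--Krasovskii functional. First I would settle regularity and impulse-freeness. Since $E$ (the $\mathcal{\tilde{E}}$ of the earlier sections) is singular, I pick full-rank $U,V$ with $UE=0$, $EV=0$ and use $E^TP=P^TE\ge0$ together with the block $\Xi_{11}+\Gamma+\Gamma^T$ of $\varphi_k$ restricted along $V$ --- so that the algebraic (non-dynamic) part of \eqref{edy3} is invertible --- to conclude $\det(sE-\bar{\mathcal{A}}_i)\not\equiv0$ and $\deg\det(sE-\bar{\mathcal{A}}_i)=\operatorname{Rank}(E)$, i.e. items (i)--(ii) of Definition \ref{def1}; here $\bar{\mathcal{A}}_i:=\mathcal{\tilde{A}}_i+\mathcal{\tilde{G}}_i\Delta\mathcal{\tilde{A}}_i$.

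For stochastic stability I would take
\[
V=y^T(t)E^TP_rE\,y(t)+\int_{t-d_m}^{t}y^T(s)Q_1y(s)\,ds+\int_{t-d_M}^{t}y^T(s)Q_2y(s)\,ds+\bar{d}\!\int_{-\bar{d}}^{0}\!\!\int_{t+\theta}^{t}\dot{y}^T(s)E^TRE\,\dot{y}(s)\,ds\,d\theta,
\]
in the spirit of $V_1+V_2+V_3$ from Theorem 1. Applying the weak infinitesimal operator $\mathcal{L}$ --- with the sojourn-time/CDF limits evaluated exactly as in the proof of Theorem 1 --- yields the mode-transition term $y^T\!\sum_j\pi_{ij}E^TP_jy$, the cross terms in $\bar{\mathcal{A}}_i$, the delayed term via $\mathcal{\tilde{B}}_i\mathcal{\tilde{K}}_i\otimes\mathcal{H}\,y(t-d)$, the error term via $\mathcal{\tilde{B}}_i\mathcal{\tilde{K}}_i\otimes\mathcal{H}\,e_k(t)$, and the Itô diffusion contribution $y^T\mathcal{\tilde{D}}_i^TP_r\mathcal{\tilde{D}}_iy$, this last remaining a pure quadratic precisely because $\mathcal{B}_i^T\hat{\mathcal{P}}_i\mathcal{\tilde{D}}_i=0$.

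Next I would bound the residual single integral $-\bar{d}\int_{t-\bar{d}}^{t}\dot{y}^TE^TRE\,\dot{y}\,ds$ by splitting at $t-d(t)$ and applying Jensen's inequality to each piece, inject the free-weighting-matrix zero identities built from $M_1,M_2,M_3$ as in \eqref{032}, and add the S-procedure term $\gamma^{-2}\big(\sigma(i)\,y^T(t-d(t))\Phi\,y(t-d(t))-e_k^T(t)\Phi e_k(t)\big)\ge0$. The norm-bounded uncertainty $\mathcal{\tilde{G}}_i\Delta\mathcal{\tilde{A}}_i=\mathcal{\tilde{G}}_iM_iF_i(t)N_i$ with $F_i^TF_i\le I$ is absorbed via Lemma 2 (and Remark 1) with the scalars $\varepsilon_i$; a Schur complement on the resulting $\mathcal{M}_i$-quadratic yields the side condition $\begin{bmatrix}-\hat{\mathcal{P}}_i&\hat{\mathcal{P}}_i\mathcal{M}_i\\ *&-\varepsilon_{2i}\mathcal{I}\end{bmatrix}<0$. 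Collecting everything into $\xi(t)=[\,y^T(t),\,y^T(t-d(t)),\,y^T(t-d_m),\,y^T(t-d_M),\,e_k^T(t)\,]^T$ gives $\mathbf{E}[\mathcal{L}V]\le\xi^T(t)\Psi_i^{(k)}\xi(t)$, and Schur complements on $\Psi_i^{(k)}<0$ in the blocks carrying $L_1,L_2,\ell,\bar{H},\bar{G}$ and the $\bar{\tau}R$ term reproduce exactly the pair of LMIs $\varphi_k<0$, $k\in\{1,2\}$, for the two cases. Then $\mathbf{E}[\mathcal{L}V]\le-\lambda\|y(t)\|^2$ for some $\lambda>0$, Dynkin's formula gives $\mathbf{E}\{\int_0^{\infty}\|y(s)\|^2ds\}<\infty$, so by Definition \ref{def1} and the regularity/impulse-freeness already shown, \eqref{edy3} is stochastically admissible; the gain formula $K_i=Y_iX^{-1}$ comes from the change of variables $Y_i=K_iX$ that linearizes the bilinear $\mathcal{\tilde{B}}_i\mathcal{\tilde{K}}_i\otimes\mathcal{H}$ terms appearing in $\ell$ and $L_1$.

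The main obstacle I anticipate is the interval bookkeeping of \eqref{0047}--\eqref{0051} together with Cases 1 and 2: one must check that on each sub-interval $\mathrm{M}_{0,k},\mathrm{M}_{n,k},\mathrm{M}_{d,k}$ the effective delay $\delta(t)$ stays inside one common window $[\delta_k,\,h+\bar{\delta}]\subseteq[d_m,d_M]$ and that $e_k(t)$ --- defined piecewise and vanishing on $\mathrm{M}_{0,k}$ --- satisfies the uniform quadratic bound on all three pieces, so that the single LMI for each case certifies the whole timeline and not merely a fragment of it. A secondary subtlety is the singular-$E$ algebra: one must genuinely invoke $\mathcal{B}_i^T\hat{\mathcal{P}}_i\mathcal{\tilde{D}}_i=0$ so the Itô integral drops out of $s(t)$ and the diffusion contributes only the benign quadratic, and verify that $E^TP=P^TE\ge0$ is compatible with the $P_r>0$ used in the first term of $V$.
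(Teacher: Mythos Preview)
Your proposal is essentially the same argument as the paper's own proof: a Lyapunov--Krasovskii functional of the $V_1+V_2+V_3$ type, the weak infinitesimal operator computed via the sojourn-time/CDF limits exactly as in Theorem~1, free-weighting (Newton--Leibniz) identities, Lemma~2 to absorb $\mathcal{\tilde G}_i\Delta\mathcal{\tilde A}_i$, the event-trigger inequality $e_k^T\Phi e_k\le\sigma(i)y^T(t-d(t))\Phi y(t-d(t))$ added as a nonnegative slack, Schur complements to reach $\varphi_k<0$, and Dynkin's formula to close stochastic stability; regularity/impulse-freeness is handled by the block decomposition induced by $UE=0$, $EV=0$ (the paper writes it equivalently via nonsingular $\bar{\mathcal M},\bar{\mathcal N}$). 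Your two-delay functional with $Q_1,Q_2$ and $d_m,d_M$ in fact matches the LMI in the statement more cleanly than the single-$d$ functional the paper writes down, and your explicit attention to the sub-interval bookkeeping of $\mathrm{M}_{0,k},\mathrm{M}_{n,k},\mathrm{M}_{d,k}$ is a point the paper passes over; neither difference changes the line of proof.
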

\begin{proof}
	Consider the following form of the Lyapunov function
	\begin{equation}
		V(y(t),r(t),t)=V_1+V_2+V_3
	\end{equation}
	where
	\begin{equation}
		\begin{aligned}
			V_1&=y^T(t){\mathcal{\tilde{E}}}^{T} \left(I \otimes P_r\right)  {\mathcal{\tilde{E}}}y(t),  V_2\\
&=\int_{t-d}^t y^T(s)\left(I \otimes Q_r\right) y(s) \mathrm{d} s , V_3\\&=\int_{t-d}^t \int_{t+\theta}^t \dot{y}^T(s)\left(I \otimes R_r\right) \dot{y}(s) \mathrm{d} s \mathrm{~d} \theta,
		\end{aligned}
	\end{equation}
	with $P_{r}>0$, $Q_{r}>0$,$R_{r}>0$, let $\tilde{P}_{r}=I\otimes P_{r}$, $\tilde{Q}_{r}=I\otimes Q_{r}$, $\tilde{R}_{r}=I\otimes R_{r}$. Then, for $r\in N$ and $t\geq 0$, using the operational properties of stochastic differential equations , one can obtain
	\begin{equation}
		\begin{aligned}
			\mathcal{L}V(y(t), r(t), t)=&y^T(t) \sum_{j=1}^s \pi_{i j}(b) \mathcal{\tilde{E}}^T \bar{P}_j y(t) +2y^T(t)\tilde{\mathcal{P}}_i \left(\left(\mathcal{\tilde{A}}_{i}+\mathcal{\tilde{G}}_{i}\Delta\mathcal{\tilde{A}}_{i}(t)\right)y(t)\right.\\
&\left.+\mathcal{\tilde{B}}_i\mathcal{\tilde{K}}_i\otimes\mathcal{H}y(t-\delta)+\mathcal{\tilde{B}}_i\mathcal{\tilde{K}}_i\otimes\mathcal{H}e_{k}(t)\right) \\
			&+y(t)^T{\mathcal{\tilde{D}}}^{T}_{i} \tilde{\mathcal{P}}_i {\mathcal{\tilde{D}}}_{i}y(t)+y^{T}(t)\tilde{Q}_j y(t)-y^{T}(t-d)\tilde{Q}_j y(t-d)\\&+d\dot{y}^T(t) {\mathcal{\tilde{E}}}^{T}\tilde{R}{\mathcal{\tilde{E}}}\dot{y}(t)-\int_{t-d}^t \dot{y}^T(s){\mathcal{\tilde{E}}}^{T} \tilde{R}{\mathcal{\tilde{E}}}\dot{y}(s) \mathrm{d} s.
		\end{aligned}
	\end{equation}
	Based on the Newton-Leibniz formula, it holds that
	\begin{equation}
		\begin{aligned}
			2\eta^T(t)H\left[\mathcal{\tilde{E}}y(t)-\mathcal{\tilde{E}}y(t-d(t))-\int_{t-d(t)}^t\mathcal{\tilde{E}}\dot{y}(s)ds\right]&=0,\\
			2\eta^T(t)N\left[\mathcal{\tilde{E}}y(t-d(t))-\mathcal{\tilde{E}}y(t-d)-\int_{t-d}^{t-d(t)}\mathcal{\tilde{E}}\dot{y}(s)ds\right]&=0.\\
		\end{aligned}
	\end{equation}
	where $\eta^T(t)=\left[y^T(t)\, y^T(t-d(t))\, y^T(t-d)\,\int_{t-d(t)}^t{\dot{y}^T{}\mathcal{\tilde{E}}}^{T}(s)ds\,\,\dot{y}^{T}\,\,e^T(t)\right]$.
	Based on the event-trigged scheme, it follows that
	\begin{equation}
		\begin{aligned}
			\mathcal{L}V(y(t), r(t), t)=&y^T(t) \sum_{j=1}^s \pi_{i j}(b) \mathcal{\tilde{E}}^T \bar{P}_j y(t) +2y^T(t)\tilde{\mathcal{P}}_i \left(\left(\mathcal{\tilde{A}}_{i}+\mathcal{\tilde{G}}_{i}\Delta\mathcal{\tilde{A}}_{i}(t)\right)y(t)+\right.\\&\left. \mathcal{\tilde{B}}_i\mathcal{\tilde{K}}_i\otimes\mathcal{H}y(t-\delta)+\mathcal{\tilde{B}}_i\mathcal{\tilde{K}}_i\otimes\mathcal{H}e_{k}(t)\right) \\
			&+y(t)^T{\mathcal{\tilde{D}}}^{T}_{i} \tilde{\mathcal{P}}_i {\mathcal{\tilde{D}}}_{i}y(t)+y^{T}(t)\tilde{Q}_j y(t)-y^{T}(t-d)\tilde{Q}_j y(t-d)\\
&+d\dot{y}^T(t) {\mathcal{\tilde{E}}}^{T}\tilde{R}{\mathcal{\tilde{E}}}\dot{y}(t)-\int_{t-d}^t \dot{y}^T(s){\mathcal{\tilde{E}}}^{T} \tilde{R}{\mathcal{\tilde{E}}}\dot{y}(s) \mathrm{d} s
		\end{aligned}
	\end{equation}
	Thus, it is easy to get that there exists a scalar $h > 0 $such that for $i\in N$:
	\begin{equation}
		\mathcal{L}V(y(t), r(t), t)\leq -h \|y(t)\|^{2}, t\in\left[t_k h+\delta_k, t_{k+1} h++\delta_{k+1}\right).
	\end{equation}
	According to Dynkin’s formula, for any $t \ge \delta(t)$
	\begin{equation}
		\mathbf{E}\left\{\int_{\delta(t)}^t\|y(t)\|^2 d t\right\} \leq h^{-1} \mathbf{E} \left\{\mathcal{L}V(y(t), r(t), t)\right\}
	\end{equation}
	Next we will improve the regular and impulse-free of the
	nominal system. Since rank($\mathcal{E}$) $\leq$ n, there exit two nonsingular matrices $\bar{\mathcal{M}}$ and $\bar{\mathcal{N}} \in R^{n×n}$ such that
	\begin{equation}
		\begin{aligned}
			\bar{\mathcal{M}}{\mathcal{EN}} =\begin{bmatrix}\mathcal{I}_r&0\\0&0\end{bmatrix},\bar{\mathcal{M}}\tilde{\mathcal{A}}_i(t)\bar{\mathcal{N}}=\begin{bmatrix}\tilde{\mathcal{A}}_{1i}&\tilde{\mathcal{A}}_{2i}\\\tilde{\mathcal{A}}_{3i}&\tilde{\mathcal{A}}_{4i}\end{bmatrix}, \\
			\bar{\mathcal{M}}^{-T}\mathcal{P}_{i}\bar{\mathcal{N}} =\begin{bmatrix}\mathcal{P}_{1i}&\mathcal{P}_{2i}\\\mathcal{P}_{3i}&\mathcal{P}_{4i}\end{bmatrix},det(\bar{\mathcal{M}})\ne0,det(\bar{\mathcal{N}})\ne 0
		\end{aligned}
	\end{equation}
	define  $\mathcal{P}_{i}=\tilde{\mathcal{P}}_{i}\mathcal{E}+\mathcal{U}^{T}\tilde{\mathcal{Q}}_{i}\mathcal{V}^{T},\mathcal{X}_{i}=\mathcal{P}_{i}^{-1}=\hat{\mathcal{P}}_{i}\mathcal{E}^{T}+\mathcal{V}\hat{\mathcal{Q}}_{i}\mathcal{U},$
	where $\tilde{\mathcal{P}}>0, \tilde{\mathcal{Q}}_i$, and $\hat{\mathcal{Q}}_i$ are nonsingular matrices, $\hat{\mathcal{P}}_i=\hat{\mathcal{P}}_i^T$. In addition, it is obtained that $\mathcal{E}_L^T \tilde{\mathcal{P}}_i \mathcal{E}_L=\left(\mathcal{E}_R^T \hat{\mathcal{P}}_i \mathcal{E}_R\right)^{-1}$ with $\mathcal{E}=\mathcal{E}_L \mathcal{E}_R^T . \mathcal{E}_L \in \mathbb{R}^{n \times r}$ and $\mathcal{E}_R \in \mathbb{R}^{n \times r}$ are of full column rank.
	Since $\operatorname{rank}\{\mathcal{E}\}=\operatorname{rank}\left\{\left[\begin{array}{ll}\mathcal{E} & \mathcal{D}_i\end{array}\right]\right\}$, there exists a matrix $\mathcal{D}_{i d}$, such that $\mathcal{D}_i=\mathcal{E}_L \mathcal{D}_{i d}$. Then, $\mathcal{D}_i^T \tilde{\mathcal{P}}_i \mathcal{D}_i$ can be written as $\mathcal{D}_{i d}^T \mathcal{E}_L^T \tilde{\mathcal{P}}_i \mathcal{E}_L \mathcal{D}_{i d}$,
	due to
	\begin{equation}
		\mathcal{P}_{i}^{T}\tilde{\mathcal{A}}_{i}(t)+\tilde{\mathcal{A}}_{i}^{T}(t)\mathcal{P}_{i}+\sum_{j=1}^{\mathcal{N}}\mu_{ij}(h)\tilde{E}^{T}\mathcal{P}_{j}<0,
	\end{equation}
	it is straightforward to see that $\mathcal{P}_{i2}=0$, Pre- and post-multiplying (65) by $\bar{\mathcal{N}}^{T}$ and $\bar{\mathcal{N}}$ gives
	\begin{equation}
		\begin{bmatrix}
			* & *\\
			* & \tilde{\mathcal{A}}_{4i}^{T}\mathcal{P}_{4i}+\mathcal{P}_{4i}^{T}\tilde{\mathcal{A}}_{4i}\end{bmatrix}<0,
	\end{equation}
	By Lemma 2, it follows from (58) that
	\begin{equation}
		\begin{aligned}
			2x^{T}(t)\mathcal{P}_{i}^{T}\Delta\mathcal{A}_{i}(t)x(t) 
			&\leq\varepsilon_{1i}^{-1}x^{T}(t)\mathcal{P}_{i}^{T}\mathcal{M}_{i}^{T}\mathcal{P}_{i}x(t)+\varepsilon_{1i}x^{T}(t)\mathcal{N}_{i}^{T}\mathcal{N}_{i}x(t),  \\
			-2x^{T}(t){\cal P}_{i}^{T}{\cal B}_{i}({\cal B}_{i}^{T}{\hat{\cal P}}_{i}{\cal B}_{i})^{-1}{\cal B}_{i}^{T}{\hat{\cal P}}_{i}\Delta{\cal A}_{i}(t)&\leq\varepsilon_{2i}^{2}x^{T}(t){\cal P}_{i}^{T}{\cal B}_{i}({\cal B}_{i}^{T}{\cal P}_{i}{\cal B}_{i})^{-1}{\cal B}_{i}^{T}{\cal P}_{i}\\& +\frac{1}{\varepsilon_{2i}^{2}}\mathcal{N}_{i}^{T}\mathcal{F}_{i}^{T}(t)\mathcal{M}_{i}^{T}\hat{\mathcal{P}}_{i}\mathcal{M}_{i}\mathcal{F}_{i}(t)\mathcal{N}_{i}.
		\end{aligned}
	\end{equation}
	Performing a congruence transformation to (11) by $\operatorname{diag}\left\{\mathcal{X}_i, \mathcal{I}, \mathcal{I}, \mathcal{I}, \mathcal{I}\right\}$ yields
	\begin{equation}
		\begin{aligned}
			\left[\begin{array}{cc}
				\Sigma_{1 i} & \Sigma_{2 i} \\
				* & \Sigma_{3 i}
			\end{array}\right]<0,
		\end{aligned}
	\end{equation}
	where
	\begin{equation}
		\begin{aligned}
			\Sigma_{1 i}= & \mathcal{A}_i \mathcal{X}_i+\mathcal{B}_i \mathcal{Y}_i+\mathcal{X}_i^T \mathcal{A}_i^T+\mathcal{Y}_i^T \mathcal{B}_i^T+\mathcal{X}_i^T \mathcal{D}_{i d}^T  \left(\mathcal{E}_R^T \hat{\mathcal{P}}_i \mathcal{E}_R\right)^{-1} \mathcal{D}_{i d} \mathcal{X}_i\\&+\mu_{i i}(h) \mathcal{X}_i^T \mathcal{E}_R\left(\mathcal{E}_R^T \hat{\mathcal{P}}_i \mathcal{E}_R\right)^{-1} \mathcal{E}_R^T \mathcal{X}_i \\
			& +\sum_{j=1, j \neq i}^{\mathcal{N}} \mu_{i j}(h) \mathcal{X}_i^T \mathcal{E}_R\left(\mathcal{E}_R^T \hat{\mathcal{P}}_j \mathcal{E}_R\right)^{-1} \mathcal{E}_R^T \mathcal{X}_i, \\
			\Sigma_{2 i}= & {\left[\mathcal{X}_i^T \mathcal{B}_i, \mathcal{M}_i, \varepsilon_{1 i} \mathcal{X}_i^T \mathcal{N}_i^T, \mathcal{X}_i^T \mathcal{N}_i^T\right] } \\
			\Sigma_{3 i}= & -\operatorname{diag}\left\{\mathcal{B}_i^T \hat{\mathcal{P}}_i \mathcal{B}_i, \varepsilon_{1 i} \mathcal{I}, \varepsilon_{1 i} \mathcal{I}, \varepsilon_{2 i} \mathcal{I}\right\} .
		\end{aligned}
	\end{equation}
	On the other hand,
	\begin{equation}
		\begin{aligned}
			\pi_{i i}(h) \mathcal{X}_i^T \mathcal{E}_R\left(\mathcal{E}_R^T \hat{\mathcal{P}}_i \mathcal{E}_R\right)^{-1} \mathcal{E}_R^T \mathcal{X}_i
			\leq  \pi_{i i}(h)\left(\mathcal{E} \mathcal{X}_i+\mathcal{X}_i^T \mathcal{E}^T-\mathcal{E} \hat{\mathcal{P}}_i \mathcal{E}^T\right)
		\end{aligned}
	\end{equation}
	From the above, it can be concluded that
	\begin{equation}
		\begin{aligned}
			\mathcal{L}&V(y(t), r(t), t)\leq y^T(t) \sum_{j=1}^s \pi_{i j}(b) \mathcal{\tilde{E}}^T \bar{P}_j y(t) +2y^T(t)\tilde{\mathcal{P}}_i \left(\left(\mathcal{\tilde{A}}_{i}+\mathcal{\tilde{G}}_{i}\Delta\mathcal{\tilde{A}}_{i}(t)\right)y(t)+\right.
\\&\left.\mathcal{\tilde{B}}_i\mathcal{\tilde{K}}_i\otimes\mathcal{H}y(t-\delta)+\mathcal{\tilde{B}}_i\mathcal{\tilde{K}}_i\otimes\mathcal{H}e_{k}(t)\right)\\
			&+y(t)^T{\mathcal{\tilde{D}}}^{T}_{i} \tilde{\mathcal{P}}_i {\mathcal{\tilde{D}}}_{i}y(t)+y^{T}(t)\tilde{Q}_j y(t)-e_k^T(t) \Phi e_k(t) +\sigma_1 y(t-d(t))^T\Phi y(t-d(t)) \\
			&+\varepsilon_{1i}^{-1}y^{T}(t)\bar{\mathcal{P}}^{T}_{i}{\cal M}_{i}{\cal M}_{i}^{T}\bar{\mathcal{P}}_{i}y(t)+\varepsilon_{1i}y^{T}(t){\cal N}_{i}^{T}{\cal N}_{i}y(t)+\varepsilon_{2i}^{2}y^{T}(t)\mathcal{\tilde{B}}_i\left(\mathcal{\tilde{B}}_i^T\hat{\tilde{\mathcal{P}}}_i\mathcal{\tilde{B}}_i\right)^{-1} \mathcal{\tilde{B}}_i^Ty(t)  \\
			&+\frac{1}{\varepsilon_{2i}^{2}}{\cal N}_{i}^{T}{\cal F}_{i}^{T}(t){\cal M}_{i}^{T}\hat{\tilde{\mathcal{P}}}_i{\cal M}_{i}{\cal F}_{i}(t){\cal N}_{i}+d\dot{y}^T(t) {\mathcal{\tilde{E}}}^{T}\tilde{R}{\mathcal{\tilde{E}}}\dot{y}(t)\\&-\frac{1}{d}\left[\int_{t-d}^t \dot{y}^T(s){\mathcal{\tilde{E}}}^{T}ds\right]\tilde{R}\left[\int_{t-d}^t{\mathcal{\tilde{E}}}\dot{y}(s)ds\right]\\
			&-y^{T}(t-d)\tilde{Q}_j y(t-d)+2\eta^T(t)N\left[{\mathcal{\tilde{E}}}y(t-d(t))-{\mathcal{\tilde{E}}}y(t-d)-\int_{t-d}^{t-d(t)}{\mathcal{\tilde{E}}}\dot{y}(s)ds\right]\\
			&+2\eta^T(t)H\left[{\mathcal{\tilde{E}}}y(t)-{\mathcal{\tilde{E}}}y(t-d(t))-\int_{t-d(t)}^t{\mathcal{\tilde{E}}}\dot{y}(s)ds\right]\\
			\leq&\eta^{T}(t)\big[\Xi_{11}+\Gamma+\Gamma^{T}+d_{m}M_{1}R^{-1}M_{1}^{T}+(d(t)-d_{m}) M_{2}R^{-1}M_{2}^{T}\\ &+(\bar{d}-d(t))M_{3}R^{-1}M_{3}^{T}+\bar{d}\ell R\ell^{T}\big]\eta(t)
		\end{aligned}
	\end{equation}
	with $\Upsilon=\Xi_{11}+\Gamma+\Gamma^T+d_m M_1R^{-1}M_1^T+(d(t)-d_m)M_2R^{-1}M_2^T+(\bar{d}-d(t)) M_3R^{-1}M_3^T+\bar{d}\ell R\ell^T$.
	\begin{equation}
		\begin{aligned}
			\Xi_{11}=\left[\begin{array}{ccccc}
				P^T A+A^T P+Q_1+Q & P^T B K & 0 & 0 & P^T B K \\
				* & \sigma_1 \Phi-Q & 0 & 0 & 0 \\
				* & * & Q_2-Q_1 & 0 & 0 \\
				* & * & * & -Q_2 & 0 \\
				* & * & * & * & -\Phi+\sigma_2{ }^2 I
			\end{array}\right] \\
		\end{aligned}
	\end{equation}
	By employing linear matrix inequalities (LMIs) in (58), 
	one can obtain $ \Upsilon <0$ , which indicates that the resultant closed-loop system derived from \eqref{edy3} is stochastically stable.   The proof is finished.
\end{proof}
\section{Simulation}\label{sec_5}
To demonstrate the effectiveness and advantages of a proposed event-triggered sliding mode fault-tolerant control strategy for a considered leader-follower multi-agent system, an illustrative example is provided in this section. Consider a multi-agent system consisting of five followers and one leader. Suppose that a 1 or a 0 is used respectively to indicate whether there is information exchange between agents in the communication topology. Let $ F=\{1,2,3,4,5\}$ and $H=\{0\} $ be the set of followers and the set of leaders, respectively, and $S=\{1,2,3\}$ be the communication topology set. To ensure the formation of the multi-agent formation, the preset position of the followers is formed into a closed $R=10$ circular formation, and the leader will be located in the center of the circular formation, and the direction of movement of the leader is the forward direction of the formation. The final states of the follower and leader (position state and formation state) are represented by different colored lines. The position of each follower relative to the leader is as follows:
\begin{equation}
	\begin{aligned}
		h_{i}=R\left[cos(\frac{2\pi}{5}i),sin(\frac{2\pi}{5}i)\right]^{T}, i=1,2,\ldots,5,
	\end{aligned}
\end{equation}
Set the initial location of the leader and follower agents to: $x_{0}(0)=[0,0]$, $x_{1}(0)=[-6,13]$, $x_{2}(0)=[-14,7]$, $x_{3}(0)=[-13,-13]$, $x_{4}(0)=[8,-11]$, $x_{5}(0)=[5,4]$; The initial speed of all agents is set to 0. It is assumed that a partial actuator failure situation may occur in a multi-agent system, and the unknown time-varying failure rate in the five follower nodes satisfies the following condition $0<\eta_i<1$, which means that all five followers encounter time-varying actuator failures.
In the simulation, the control input of the leader is set as $u_{0}=2sin(10t)$ and the nonlinear function $f=e^{t}\sin(10t)$. According to the event triggering condition taken into account in the communication process of the multi-agent formation system above, the event triggering threshold is $\sigma=0.1$ in this experiment, and $\alpha=0.3$ and $\varepsilon=0.2$ are selected as the sliding mode control law.

\begin{figure}[htbp]
	\centering
	\includegraphics[width=0.9\linewidth]{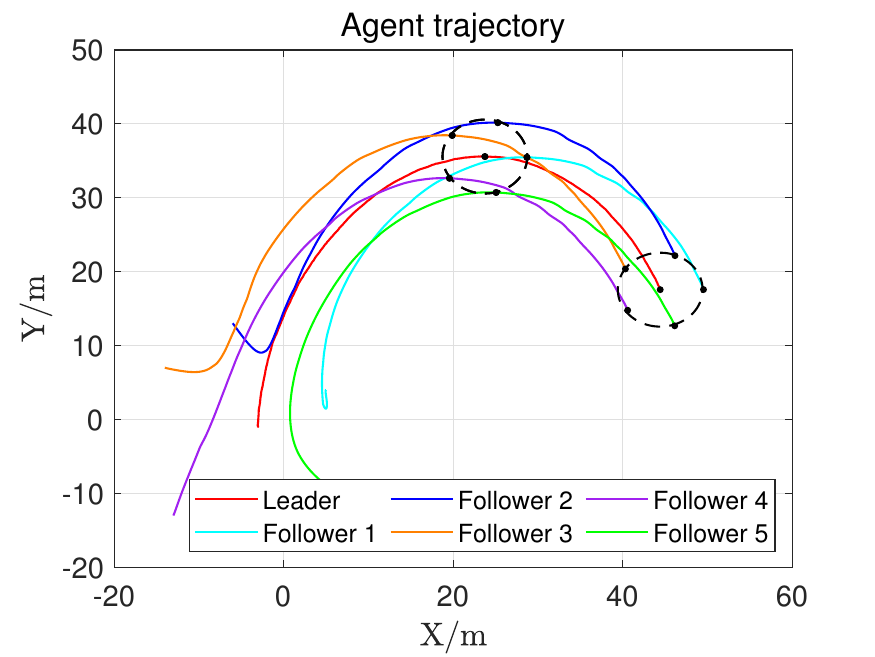}
	\caption{Trajectory of the leader agent and follower agents}
	\label{fig2}
\end{figure}
\begin{figure}[htbp]
	\centering
	\includegraphics[width=0.9\linewidth]{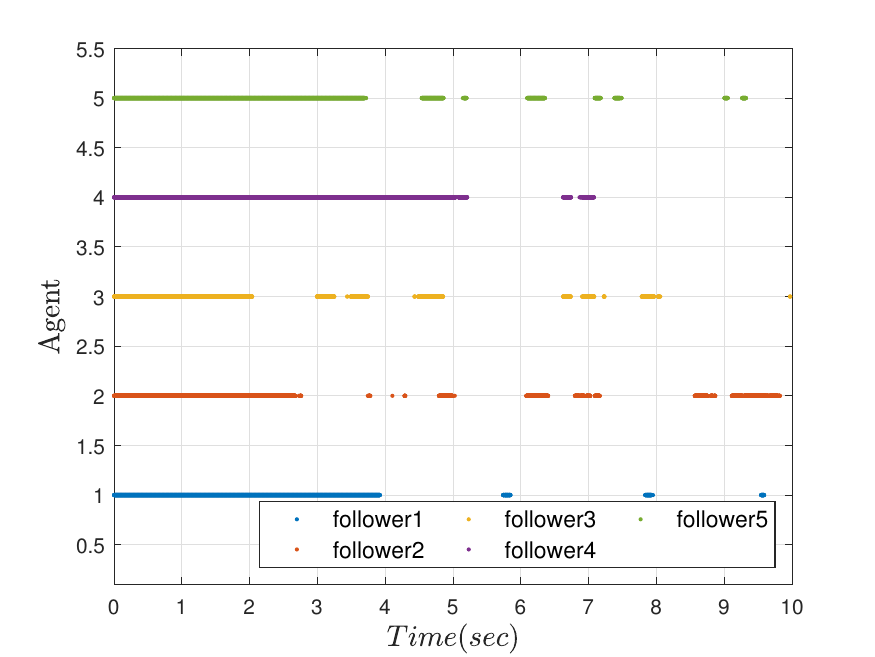}
	\caption{Agent event trigger situation}
	\label{fig3}
\end{figure}
\begin{figure}[htbp]
	\centering
	\includegraphics[width=0.9\linewidth]{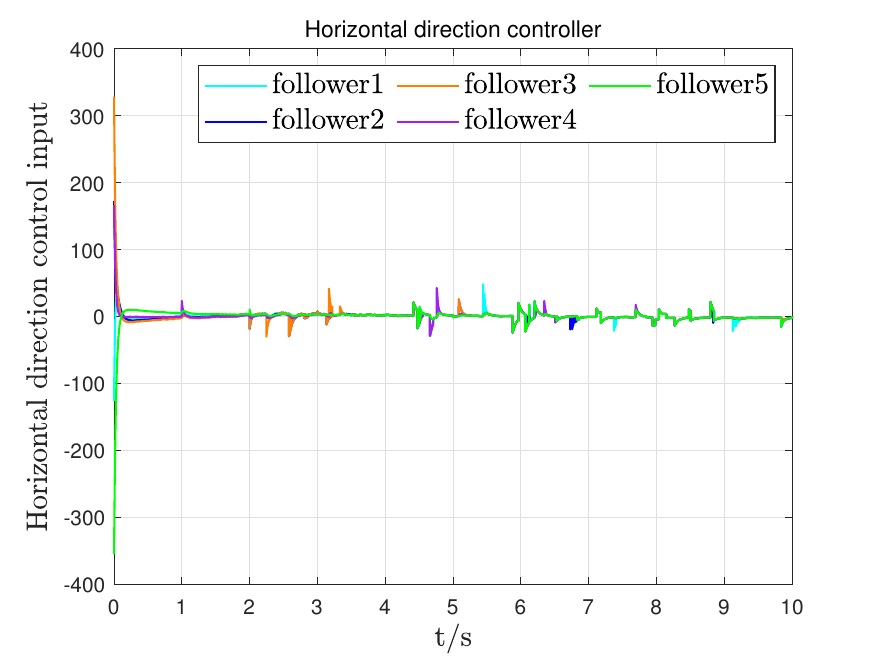}
	\caption{Control input of the follower agents in horizontal direction}
	\label{fig4}
\end{figure}
\begin{figure}[htbp]
	\centering
	\includegraphics[width=0.9\linewidth]{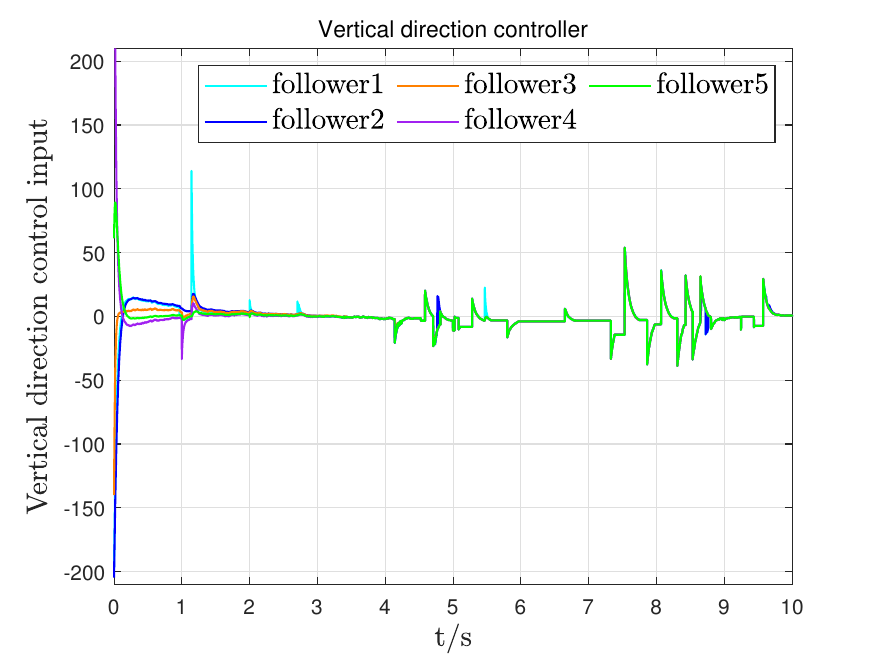}
	\caption{Control input of the follower agents in vertical direction}
	\label{fig5}
\end{figure}
\begin{figure}[htbp]
	\centering
	\includegraphics[width=0.9\linewidth]{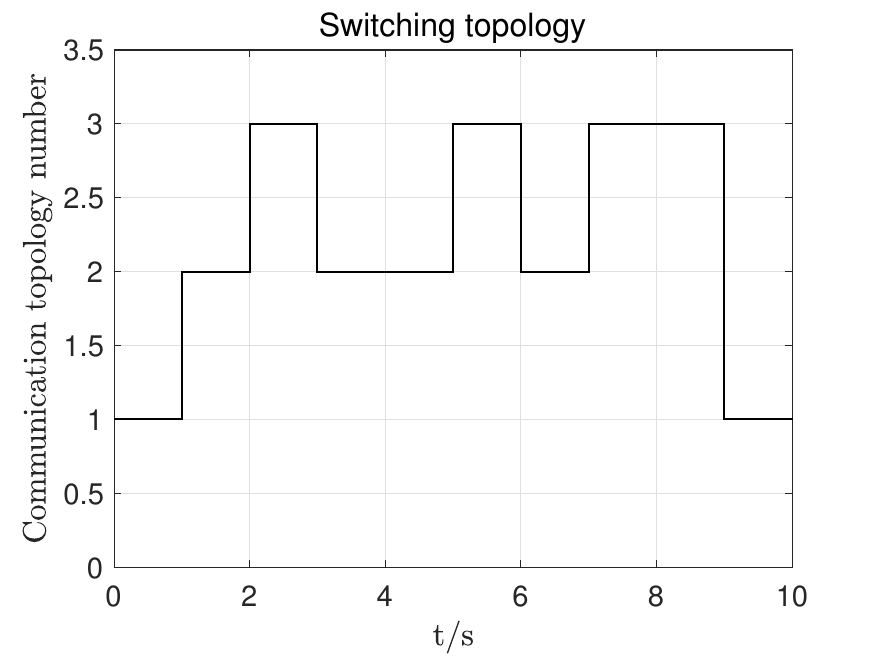}
	\caption{Communication topology network number}
	\label{fig6}
\end{figure}
\begin{figure}[htbp]
	\centering
	\includegraphics[width=0.9\linewidth]{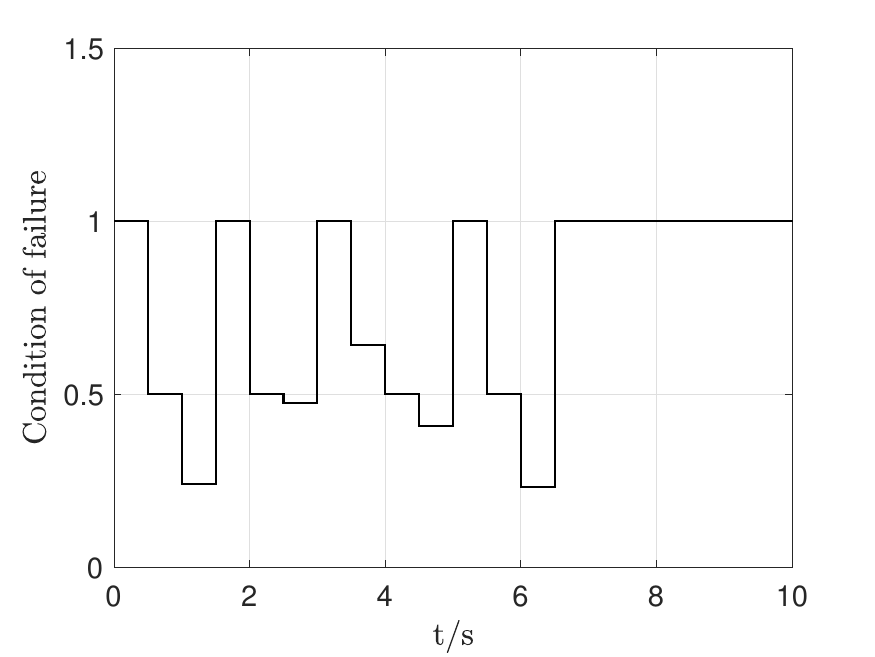}
	\caption{Actuator fault condition}
	\label{fig7}
\end{figure}
\begin{figure}[htbp]
	\centering
	\includegraphics[width=0.9\linewidth]{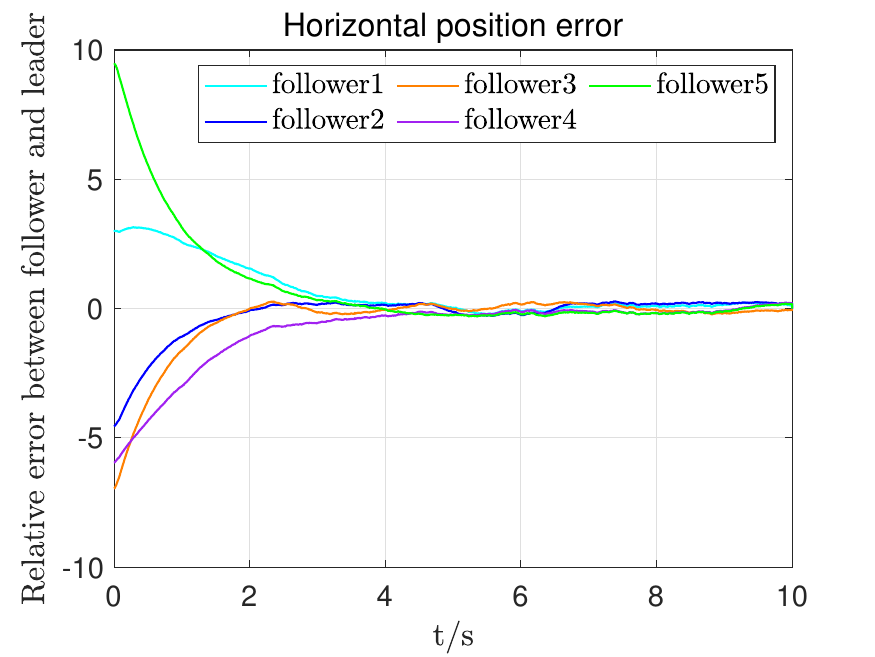}
	\caption{Formation tracking error in horizontal direction}
	\label{fig8}
\end{figure}
\begin{figure}[htbp]
	\centering
	\includegraphics[width=0.9\linewidth]{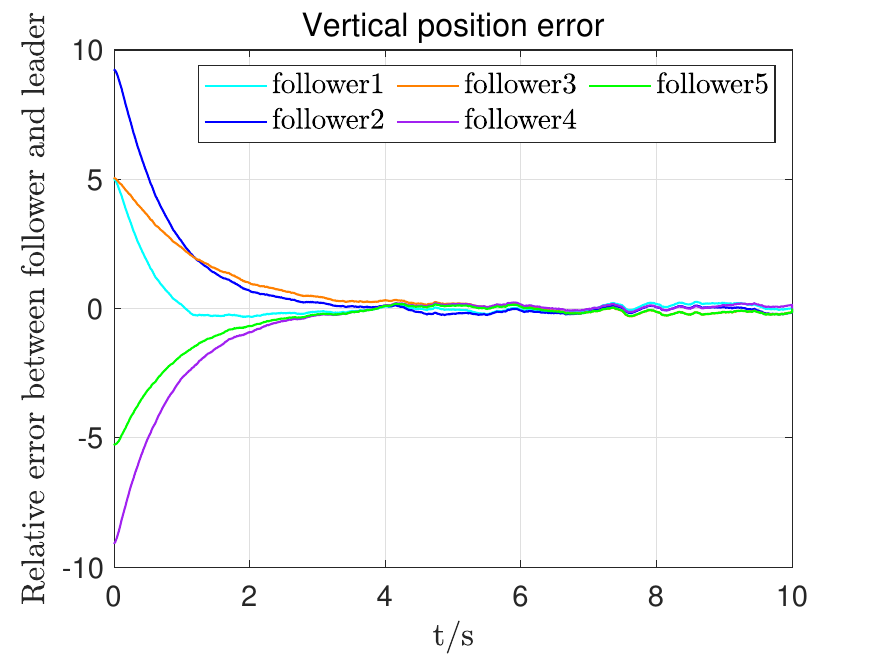}
	\caption{Formation tracking error in vertical direction}
	\label{fig9}
\end{figure}
As shown in Figure 1, the communication topology in this chapter satisfies Markov random switching and is randomly selected among the three topologies. And all communication topologies are directed connections. Figure 2 shows the relative position relationship between the leader and the follower. From Figure 2, it can be clearly seen that the follower keeps a pentagonal circular formation tracking around the leader located in the center of the pentagon. When the multi-agent system randomly switches topology, the agents do not arrive at the predetermined position and do not form the predetermined formation. After about 5s, the preset multi-agent formation formation is formed. Figure 3 shows the situation of multi-agent system event triggering. In the initial stage of multi-agent system formation, the departure frequency is very frequent due to excessive position error. When the formation of multi-agent system is basically stable, the triggered frequency drops.Figure 4 and Figure 5 Controller output on the axis and axis, where the controller output is large at the beginning of the operation, indicating that the error value between the initial position and the preset position is too large, but after 2s the controller input shows an exponential decline, indicating that the formation is about to be formed. The sudden increase of the controller during the formation operation of the multi-agent formation system indicates that the accumulated position error under the event trigger exceeds the set threshold, and the data transmission needs to be re-performed and the controller corrects the error. Figure 6. It is a random communication topology switching process of multi-agent system within 10s, in which the communication topology between agents follows Markov random process. Figure 7 shows the random failure of the actuator, in which the probability of failure is not $100\%$. When the actuator fails, the actuator can only execute part of the control commands. The smaller the value in Figure 7, the more serious the fault of the actuator and the worse the execution effect of the controller. Figures 8 and 9 show the preset position tracking error of the follower relative to the leader along and along the axis. It can be seen from the figure that the position error between the follower and the leader will always fluctuate because the formation is affected by external interference and in order to reduce the communication bandwidth. However, the formation of the multi-agent system will always be maintained, and the multi-agent formation system has been formed around 4s.

\section{Conclusion}\label{sec_6}
For nonlinear multi-agent system formation control with communication delay in random topology, a fault-tolerant sliding mode formation control strategy in Markov random switching topology is proposed in this chapter. A new event-triggered sliding mode formation control strategy for multi-agent systems is designed by combining event-triggered and sliding mode control techniques, and a sampling event-triggered controller is introduced to reduce the influence of network delay on multi-agent random topology formation systems. By using Lyapunov stability theorem to analyze and prove the stability of the system, it is shown that the proposed control algorithm can successfully solve the formation control problem which has the requirements of random topology, network delay and fault tolerance. Simulation results verify the effectiveness of the proposed controller, and show that the multi-agent formation system can still achieve the expected performance under the condition of stochastic topology and network delay.
\bibliographystyle{elsarticle-num}
\bibliography{wileyNJD-AMA}

\end{document}